\renewcommand{\p@subsection}{}
\renewcommand{\p@subsubsection}{}
\DeclareMathOperator{\Tr}{Tr}
\DeclareMathOperator*{\argmin}{\arg\!\min}
\DeclareMathOperator*{\argmax}{\arg\!\max}
\DeclareMathOperator{\EX}{\mathbb{E}}
\DeclareMathOperator{\VX}{\mathbb{V}}
\newcommand{\hil}{\mathcal{H}} 
\newcommand{\thil}{\widetilde{\mathcal{H}}} 
\newcommand{\tdim}{\tilde{d}} 
\def\<{\langle}
\def\>{\rangle}
\newcommand{\id}{\mathbbm{1}} 
\DeclareMathOperator{\End}{End}
\newtheorem{theorem}{Theorem}
\newtheorem{lemma}[theorem]{Lemma}
\newtheorem{definition}[theorem]{Definition}
\newtheorem{corollary}[theorem]{Corollary}
\newcommand{\REGRET}{\textnormal{Regret}}
\newcommand{\DIST}{\textnormal{Disturbance}}
\newcommand{\ERR}{\textnormal{Err}}
\begin{document}


\title{\large Learning pure quantum states (almost) without regret}

\author{Josep Lumbreras$^1$} 
\email{josep.lumbreras@u.nus.edu}
\author{Mikhail Terekhov$^2$}
\email{mikhail.terekhov@epfl.ch}
\author{Marco Tomamichel$^{1,3}$}
\email{marco.tomamichel@nus.edu.sg}

\affiliation{$^1$Centre for Quantum  Technologies, National University of Singapore, Singapore}
\affiliation{$^2$School of Computer and Communication Sciences, EPFL, Switzerland}
\affiliation{$^3$Department of Electrical and Computer Engineering, National University of Singapore, Singapore}

\date{\today}

\begin{abstract}
    We initiate the study of sample-optimal quantum state tomography with minimal disturbance to the samples. Can we efficiently learn a precise description of a quantum state through sequential measurements of samples while at the same time making sure that the post-measurement state of the samples is only minimally perturbed? Defining regret as the cumulative disturbance of all samples, the challenge is to find a balance between the most informative sequence of measurements on the one hand and measurements incurring minimal regret on the other. Here we answer this question for qubit states by exhibiting a protocol that for pure states achieves maximal precision while incurring a regret that grows only polylogarithmically with the number of samples, a scaling that we show to be optimal.
\end{abstract}

\maketitle

\section{Introduction}

In this work, we approach quantum state tomography from a new angle. Given sequential access to a finite number of samples of a quantum state, our goal is not only to accurately learn a classical description of the state but also to use measurements that disturb the samples as little as possible. Generally these two goals are incompatible, and we are thus interested in tomography algorithms that find an optimal balance between them. We call this setting quantum state tomography with minimal regret.

Minimizing disturbance is important in many real-world scenarios where the samples that we use for tomography are in fact resources for another tasks\,---\,and we thus want to learn the state in a way that is as non-intrusive as possible, ensuring that the post-measurement states remain useful for their intended purpose. An example of this occurs in quantum key distribution, where tomography can be used to keep reference frames aligned during a run but any disturbance due to tomographic measurements will induce bit errors in the correlations used to extract a secret key. Disturbance is also relevant for state-agnostic resource distillation where resourceful states might be destroyed by tomographic measurements but learning the unknown state is crucial since optimal extraction protocols generally depend on its description. 

In both cases we encounter a fundamental trade-off between exploration (learning the state) and exploitation (using the samples for another purpose). These types of trade-offs are fundamental to the study of adaptive algorithms in machine learning, and our work establishes a strong link between quantum tomography and the classical multi-armed bandit model in reinforcement learning. In fact, one of the main technical ingredients of the present work is a classical bandit algorithm by some of us~\cite{pmlr-v247-lumbreras24a}, originally inspired by shot noise in quantum mechanics.

To illustrate this connection from a physics perspective, it helps to reflect on how measurements disturb quantum systems. A defining feature of quantum theory is that measurements generally disturb the system being measured. But what does this disturbance intuitively mean? To illustrate this, consider a qubit prepared in the pure state

\begin{align}
|\psi\rangle = \sqrt{1 - \epsilon^2} |0\rangle + \epsilon |1\rangle,
\end{align}
with $\epsilon \in [0, 1]$. A projective measurement in the computational basis $\{ |0\rangle, |1\rangle \}$ collapses the state to $|0\rangle$ with probability $1 - \epsilon^2$, and to $|1\rangle$ with probability $\epsilon^2$. If $\epsilon = 0$ or $\epsilon = 1$, the post-measurement state coincides with the initial state with certainty—there is no disturbance. More generally, when $\epsilon \approx 0$ or $\epsilon \approx 1$, the post-measurement state remains close to the original one with high probability, indicating low disturbance. In contrast, for $\epsilon = 1/\sqrt{2}$, the state is “maximally far" in the measurement basis, and the outcome is maximally uncertain; the post-measurement state is always far to the initial state, signifying maximal disturbance. This simple example illustrates how disturbance is linked to randomness: measurements that induce minimal disturbance tend to yield more deterministic outcomes, while those that induce maximal disturbance produce outcomes with higher variance.

But how can one perform low-disturbance measurements without prior knowledge of the state? With access to only a single copy, it is fundamentally impossible to design a measurement that avoids disturbing the state while still extracting useful information. However, the situation changes when multiple identical copies are available. In that case, one could strategically use some of them to gain partial information about the state and adapt future measurements to be less disturbing. This naturally leads to the central question of our work:

\begin{center}
\textit{Given access to a finite sequence of an unknown qubit system, what is the best strategy for performing single-copy projective measurements that extract as much information as possible while minimizing the overall disturbance?}
\end{center}

The notion of disturbance is a foundational concept in quantum mechanics and has been explored from various perspectives, notably in works that reformulate the uncertainty principle to quantify the trade-off between measurement-induced disturbance and information gain~\cite{ozawa2003universally, busch2013proof}. Another framework where disturbance is studied  are \emph{weak measurements}, which aim to minimally disturb the quantum system while still providing partial information about it. This idea dates back to the seminal work~\cite{aharonov1988result}, and has since become a central tool in understanding the interplay between information gain and quantum disturbance. However, performing these measurements typically comes at the cost of low information gain: the less disturbing the measurement, the less informative it is about the quantum state, making weak measurements unsuitable for tasks that demand accurate estimation. In contrast, \emph{projective measurements}---which are the focus of this work---provide maximal information about the system but often cause significant disturbance, collapsing the state entirely. 

In our setting, we are not concerned with the disturbance of a single copy, but rather with the \emph{cumulative disturbance} across a sequence of identically prepared quantum states. Our goal is to use each copy as effectively as possible to extract information and achieve sample-optimal estimation of the underlying state. This naturally motivates the design of adaptive measurement strategies that balance the tradeoff between information gain and disturbance over time.

Another related concept is that of \emph{gentle measurements}, which were formalized recently in~\cite{aaronson2019gentle}, but have their roots in earlier work, notably the “gentle measurement” lemma introduced in~\cite{winter2002coding}. These are measurements that guarantee, for certain sets of states, that the post-measurement state remains close to the original one, while still allowing useful information to be extracted. Although this is related to weak measurements, an important distinction is that a gentle measurement is considered weak only if it remains non-disturbing across all states (not only a set). Moreover, this framework does not address how to \emph{adaptively} learn an unknown state using a sequence of projective measurements, which are typically more informative than gentle measurements~\cite{gentle_samplecomplexity}. Our contribution does not lie in proposing a new class of measurements, but rather in developing \emph{adaptive strategies} that employ projective measurements in a way that minimizes cumulative disturbance across the sequence of quantum states.

Formally, we consider a sequential decision-making scenario in which the learner has access to $T$ independent copies of an unknown qubit state $\rho$. At each round $t \in [T]$, the learner selects a measurement direction $|\psi_t\rangle$ and performs a projective measurement in the basis $\{|\psi_t\rangle, |\psi_t^c\rangle\}$. The outcome $r_t \in \{0,1\}$ is sampled according to Born’s rule, $\Pr[r_t = 1] = p_t := \langle \psi_t | \rho | \psi_t \rangle.$ The measurement outcome determines the post-measurement state $\tilde{\psi}_t \in \{|\psi_t\rangle, |\psi_t^c\rangle\}$, with $r_t = 1$ corresponding to the state collapsing to $|\psi_t\rangle$, and $r_t = 0$ to its orthogonal complement.

We quantify the disturbance introduced by the learner through the cumulative expected infidelity between the unknown state $\rho$ and the resulting post-measurement state $\tilde{\psi}_t\in\lbrace | \psi_t \rangle , | \psi^c_t \rangle \rbrace$, compared to the minimal possible disturbance, which occurs when the measurement direction $\psi_t$ is aligned with the eigenvector corresponding to the largest eigenvalue of $\rho$. Formally, we define the cumulative disturbance as
\begin{align}\label{eq:disturbance}
\DIST(T) := \sum_{t=1}^T \left( \mathbb{E}[1 - F(\rho, \tilde{\psi}_t)] - \min_{\tilde{\psi}} \EX [1 - F(\rho, \tilde{\psi})] \right),
\end{align}
where $F(\rho, \sigma) := \left( \Tr \sqrt{ \sqrt{\rho} \sigma \sqrt{\rho} } \right)^2$ denotes the quantum fidelity. Note that the second term in the definition of disturbance, $\min_{\psi}(1 - F(\rho,\psi))$, is constant across rounds and acts as a normalization ensuring that the disturbance vanishes when the learner selects the optimal, least-disturbing measurement direction.  In particular, when $\rho$ is pure, this minimum is zero, as an optimal measurement does not alter the state.
The expression for the disturbance simplifies to the following closed form,
\begin{align}
\DIST(T) = \sum_{t=1}^T 2(\lambda_{\max}(\rho) - p_t)(\lambda_{\max}(\rho) + p_t - 1),
\end{align}
where $\lambda_{\max}(\rho)$ denotes the largest eigenvalue of $\rho$. 

While the above notion of disturbance is defined with respect to the observed outcome, we could also define it as the cumulative infidelity between the unknown state and the average post-measurement state $\rho_t = p_t \psi_t + (1-p_t) \psi_t^c$, i.e 
\begin{align}\label{eq:disturbance2}
  \DIST^* (T) :=& \sum_{t=1}^T  1 - F(\rho,\rho_t).
\end{align}

Although these two notions of disturbance differ in their interpretation—depending on whether the measurement outcomes are observed or not—their behavior is qualitatively the same: both vanish when the measurement direction $\psi_t$ is aligned with the state $\rho$. In particular, when $\rho$ is a pure state, the two disturbances coincide.  Both quantities are controlled by a simpler quantity, the \emph{regret}, defined as
\begin{align}
\REGRET(T) := \sum_{t=1}^T \left( \lambda_{\max}(\rho) - \langle \psi_t | \rho | \psi_t \rangle \right),
\end{align}
since it can be checked that both disturbances satisfy
\begin{align}\label{eq:regret_intro}
\DIST(T) = \Theta(\REGRET(T)) \quad \text{and} \quad \DIST^*(T) = \Theta(\REGRET(T)),
\end{align}
which means that minimizing disturbance is essentially equivalent to minimizing regret. Intuitively, regret remains small when the chosen probe directions are closely aligned with the dominant eigenvector of $\rho$, highlighting that learning the structure of the unknown state is necessary to keep the disturbance low. However, since we are also interested in reconstructing the state, we further require that the learner outputs a final estimate $\hat{\rho}_T$ with high fidelity to the true state after $T$ rounds. That is, in addition to minimizing cumulative disturbance or regret, the algorithm must also achieve low estimation error defined as
\begin{align}\label{eq:fidelity_intro}
    \ERR(T) := 1 - F(\rho,\hat{\rho}_T).
\end{align}

The regret admits a direct physical interpretation in certain quantum thermodynamic scenarios. In particular in~\cite{lumbreras25dissipation} some of the present authors established a connection between regret and the cumulative energy dissipation in quantum state-agnostic work extraction protocols.

Consider a setting in which an unknown source emits identically prepared quantum systems in a fixed state $\rho$. One can design a battery system that sequentially interacts with each quantum copy to extract work from the system and transfer energy into the battery. If the state $\rho$ is known, the protocol can be tailored to extract work optimally at every step. However, when $\rho$ is unknown, each interaction entails a probability of failure due to the mismatch between the protocol and the true state.

This mismatch can be modeled as performing a projective measurement in a guessed direction (corresponding to the control applied to the battery), and success depends on the alignment of this direction with the actual state. In this context, the regret quantifies the cumulative free energy that is wasted due to not applying the optimal work extraction strategy. This interpretation shows the necessity of performing non-invasive tomography \emph{on the fly}, using each quantum copy not only as a source of energy but also as a source of information about the unknown state.

We emphasize that the notion of regret defined in~\eqref{eq:regret_intro} is not merely a formal construction, but a meaningful quantity that captures the cumulative disturbance caused by projective measurements. Moreover, it admits a concrete physical interpretation in quantum thermodynamics, where it corresponds to the total energy dissipation in agnostic work-extraction protocols. Thus, regret serves as both an operational and physical measure of performance in settings that require minimally disturbing projective measurements.

\textbf{Challenges.} We note that the task of minimizing the regret~\eqref{eq:regret_intro} is captured by the \emph{multi-armed quantum bandit (MAQB)} framework introduced in~\cite{lumbreras22bandit} (see also~\cite{brahmachari24intelligence}). This framework was the first to formalize the exploration--exploitation trade-off in online learning of quantum state properties using classical algorithms. In particular, it was shown that when the unknown state $\rho$ is mixed, the regret suffers a fundamental lower bound of order $\REGRET(T) = \Theta(\sqrt{T})$, which is nearly tight, as there exist protocols achieving $\REGRET(T) = \tilde{O}(\sqrt{T})$ by reducing the problem to a linear stochastic bandit~\cite{lattimore_szepesvári_2020} and applying classical bandit algorithms in that setting.

However, this lower bound does not apply when $\rho = | \psi \rangle \! \langle \psi |$ is a pure state. The reason is that the lower bound relies on having vanishing statistical noise in the random outcomes $r_t$, whereas in our setting the shot noise vanishes as the measurement direction $|\psi_t\rangle$ approaches the target state $|\psi\rangle$. In this case, the regret simplifies to
\begin{align}
    \REGRET(T) = \sum_{t=1}^T \left( 1 - F(\psi, \psi_t) \right),
\end{align}
so minimizing regret becomes equivalent to performing \emph{online quantum state tomography with minimal infidelity}, where the goal is to align each probe direction $\psi_t$ as closely as possible to the unknown pure state $\psi$. It is important to emphasize that this notion of regret minimization is not addressed by standard quantum state tomography algorithms, which typically aim to design measurement schemes optimized to output a single classical estimator $\hat{\psi}_T$ minimizing the final estimation error~\eqref{eq:fidelity_intro}, rather than controlling the cumulative error across all measurement rounds. This motivates the following question:

\begin{itemize}
\item \textbf{Question 1.} \textit{Can we perform single copy sample-optimal state tomography in infidelity and achieve at the same time sub-linear regret for unknown pure states? How much adaptiveness is needed for this task?}
\end{itemize}

It is important to note that adaptiveness plays a crucial role for algorithms that aim to minimise the cumulative disturbance of the post-measured state. One could try to use one of the existing sample-optimal algorithms in the incoherent setting such as~\cite{haah2016sample,kueng2017low,guctua2020fast}, which for $T$ samples achieve infidelity $\ERR(T) = O(1/T)$. However, since these algorithms either use fixed bases or randomized measurements, this inevitably leads to a linear scaling $\REGRET(T) = O(T)$. A natural next step is to consider a simple strategy with one round of adaptiveness, where we use a fraction $\alpha \in [0,1] $ of the copies for state tomography to produce a good estimate \(\hat{\psi}\) of the unknown $\psi$, and use the remaining copies to measure along the estimated direction. Using sample-optimal state tomography algorithms this leads to a regret scaling
\begin{align}
    \REGRET(T) = O\left(\alpha T + (T - \alpha T)\frac{1}{\alpha T} \right),
\end{align}
which, optimized over $ \alpha$, gives $\text{Regret}(T) = O(\sqrt{T})$, but results in a sub-optimal error $\ERR(T) = O(1/\sqrt{T})$.

In~\cite{lumbreras22bandit} it was left open the question whether if for pure states one can find an algorithms with a scaling better than $O(\sqrt{T})$ or find a matching lower bound. Since our problem is closely related to the MAQB framework; we name it the pure-state multi-armed quantum bandit (PSMAQB) and use it to address the following questions at the intersection of quantum state tomography and linear stochastic bandits~\cite{lattimore_szepesvári_2020}:

\begin{itemize}
\item \textbf{Question 2.} \textit{Can we break the square root barrier for pure states by showing that $\REGRET(T) = o(\sqrt{T})$ for the PSMAQB problem?}
\end{itemize}
Achieving a better scaling for the PSMAQB problem would provide a physically motivated linear bandit setting where the square root barrier can be surpassed. The linear bandit model with a noise structure studied in~\cite{pmlr-v247-lumbreras24a} is inspired by shot noise in quantum mechanics; however, as we will discuss later, this setting does not align with the PSMAQB problem. The main challenge lies in designing a new algorithm and techniques that exploit the specific structure of the PSMAQB setting compared to the standard linear bandit problem.

\section{Main results} 

In this work, we provide affirmative answers at the same time to Questions 1 and 2 through the following Theorem.

\begin{theorem}[informal]\label{th:abstract_regret_PSMAQB_abstract}
    For any unknown pure qubit state $|\psi\rangle$, we present an algorithm that achieves
    \begin{align}
    \mathbb{E}\left[\textup{Regret}(T) \right] = O \big( \log^2 (T) \big).
    \end{align}
    Moreover, at each time step $t\in [T]$, our algorithm outputs an online estimate $|\hat{\psi}_{t}\rangle$  with infidelity scaling as
    \begin{align}
    \mathbb{E} \left[ 1 -  |\langle \psi | \hat{\psi}_t \rangle|^2\right] = \widetilde{O}\left( \frac{1}{t} \right)
    \end{align}
    Both statements also holds with high probability. 
\end{theorem}

To prove Theorem~\ref{th:abstract_regret_PSMAQB_abstract} we provide an almost fully adaptive adaptive quantum state tomography algorithm that uses $O(T/\log(T))$ rounds of adaptiveness. The exact algorithm and Theorem can be found in Sections~\ref{sec:classical_LinUCBVNN} and~\ref{sec:algorithm_psmaqb}. We say that our algorithm is ``online'' because it is able to output at each time step $t\in [T]$ an estimator with the almost optimal infidelity scaling $O( \frac{1}{t})$ up to logarithmic factors. Now we sketch the main idea of how our algorithm updates the measurements.

\begin{enumerate}
    \item \textbf{Estimation.} At each time step $t\in[T]$  we use the past information of measurements on the direction of $\ket{\psi_{a_1}},...,\ket{\psi_{a_{t-1}}}$ and associated outcomes $r_1,...,r_{t-1}\in\lbrace 0 , 1 \rbrace^{\otimes t-1}$ to build a high probability confidence region $\mathcal{C}_t$ for the unknown environment $|\psi \rangle $.
    \item \textbf{Exploration-exploitation.} A batch of measurements is performed, given by the directions of maximum uncertainty of $\mathcal{C}_t$ such that they give enough information to construct $\mathcal{C}_{t+1}$ (exploration) and also minimise the regret~\eqref{eq:regret_intro} (exploitation).
\end{enumerate}

For the estimation part, we work with the Bloch sphere representation of the unknown state $\Pi = |\psi \rangle \! \langle \psi | = \frac{I+\theta\cdot\sigma}{2}$ where $\theta\in\mathbb{S}^2$ and for $\sigma$ we can take the standard Pauli Basis i.e $\sigma = (\sigma_x,\sigma_y,\sigma_z )$. For each measurement direction $\Pi_{a_t} = |\psi_{a_t}\rangle \! \langle \psi_{a_t} |$, our algorithm performs $k$ independent measurements using the same direction, and it builds the following $k$ online weighted least squares estimators of $\theta$,
\begin{align}\label{eq:abstract_weighted_lse}
    \widetilde{\theta}_{t,i} = V_t^{-1} \sum_{s=1}^t \frac{1}{\hat{\sigma}^2_s (a_s)} r_{s,i} a_s \quad \text{for }i\in[k],
\end{align}
where $r_{s,i}\in\lbrace 0,1\rbrace$ is the outcome of the measurement (up to some renormalization) using the projector $\Pi_{a_s}$ with Bloch vector $a_s\in\mathbb{R}^3$, $V_t =  \mathbb{I} + \sum_{s=1}^t \frac{1}{\hat{\sigma}^2_s (a_s)} a_s a_s^{\mathsf{T}}$ is the design matrix and $\hat{\sigma}^2_s (a_s)$ is a variance estimator of the real variance associated to the outcome $r_s$. The key point where we take advantage from the structure of the quantum problem is that the variance of the outcome $r_a$ associated to the action $\Pi_a$ can be bounded as $\VX [r_a] \leq 1 - |\langle \psi | \psi_a \rangle|^2$. 
The idea is that through a careful choice of actions we can make the terms $1/\hat{\sigma}^2_s (a_s)$ arbitrarily large and ``boost'' the confidence on the directions $a_s$ in the estimators~\eqref{eq:abstract_weighted_lse} that are close to $\theta$. However, this comes at a price, and is that in order to get good concentration bounds for our estimator we need to deal with unbounded random variables and finite variance. We address this issue using the new ideas of median of means (MoM) for online least squares estimators introduced in~\cite{bandits_heavytail,heavy_tail_linear_noptimal,heavy_tail_linear_optimal}. The construction takes inspiration from the old method of median of means~\cite[Chapter 3]{lerasle2019lecture}
 for real random variables with unbounded support and bounded variance but requires non-trivial adaptation for online linear least squares estimators. Similarly to the real case we use the $k$ independent estimators~\eqref{eq:abstract_weighted_lse} in order to construct the MoM estimator $\widetilde{\theta}^{\text{\tiny wMoM}}_{t}$ such that we can build a confidence region with concentration bounds scaling as $1-\exp(-k)$. In particular, the reference we cite for the median of means~\cite{heavy_tail_linear_optimal} is a recent theoretical contribution that explicitly posed as an open question the range of settings where this approach could be applied; our work provides a concrete and significant answer in the quantum domain. We give the exact construction in Section~\ref{sec:MoMLSE}.

\begin{figure}
    \centering
    \begin{overpic}[percent,width=0.5\textwidth]{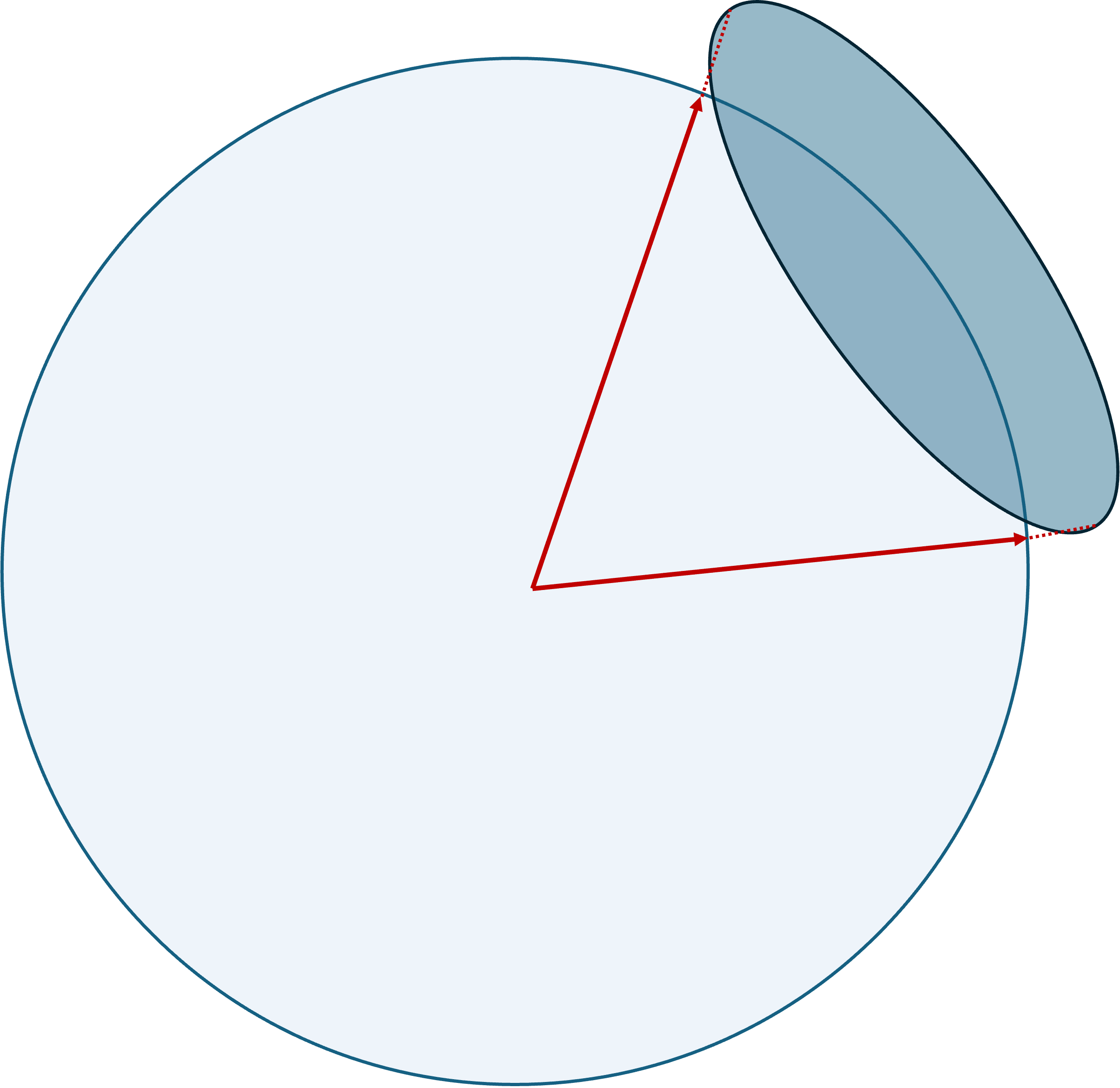}
    \put(50,70){\rotatebox{80}{$|\psi^+_{a_t}\rangle$}}
    \put(70,42){\rotatebox{0}{$|\psi^-_{a_t}\rangle$}}
    \put(75,85){\rotatebox{0}{$\mathcal{C}_t$}}
    \put(82,70){{$|\widehat{\psi}_t\rangle$}}
    \put(90,60){{$| \psi \rangle$}}
    
    \end{overpic}
    \caption{The algorithm at each time step outputs an estimator $|\widehat{\psi}_t\rangle$ and builds a high-probability confidence region $\mathcal{C}_t$ (shaded region) around the unknown state $ | \psi \rangle $ on the Bloch sphere representation. Then uses the optimistic principle to output measurement directions $|\psi^\pm_{a_t} \rangle$ that are close the unknown state $\ket{\psi} $ projecting into the Bloch sphere the extreme points of the largest principal axis of $\mathcal{C}_t$. This particular choice allows optimal learning of $\ket{\psi}$ (exploration) and simultaneously minimizes the regret (exploitation).}
    \label{fig:psmaqb_exploration_exploitation}
\end{figure}

For the exploration-exploitation part, we take the ideas that we develop in~\cite{pmlr-v247-lumbreras24a} (see Figure~\ref{fig:psmaqb_exploration_exploitation}). We give the precise action choice in Section~\ref{sec:linucb_vvn}, and here we sketch the main points. We take inspiration from the optimistic principle for bandit algorithms which in short tells us to choose the most rewarding actions with the available information. In order to use this idea, we use the confidence region that we build in the estimation part and we select measurements that align with the (unknown) direction of $\ket{\psi}$. See Figure~\ref{fig:psmaqb_exploration_exploitation}.
Our algorithm also achieves the relation $1 - |\langle \psi | \psi_{a_t} \rangle |^2 = O\left( 1/\lambda_{\min} (V_t )\right)$, where the minimum eigenvalue $\lambda_{\min}(V_t )$ quantifies the direction of maximum uncertainty (exploration) of our estimator. The maximum eigenvalue $\lambda_{\max}(V_t)$ quantifies the amount of exploitation. We can relate these two concepts through the Theorem we formally state and prove in~\cite[Theorem 3]{pmlr-v247-lumbreras24a}, which states that for our particular measurement choice we have $\lambda_{\min}(V_t) = \Omega (\sqrt{\lambda_{\max}(V_t)} )$. Using this relation and a careful analysis, we can show that $\lambda_{\max}(V_t) = \Omega (t^2)$ which gives $\lambda_{\min}(V_t) = \Omega (t)$ and the scaling $1 -  |\langle \psi | \psi_{a_t} \rangle |^2 = O(1/t)$. We emphasize that the key point that allows to achieve the rate $\lambda_{\min}(V_t) = \Omega (t)$ is the fact that the variance estimators $\hat{\sigma}^2_s$ can get as close as possible to zero since the variance of the rewards goes to zero if we select measurements close to $|\psi \rangle$.

To check the optimality of the regret, we derive a minimax expected regret lower bound based on the optimal quantum state tomography for pure state results in~\cite{hayashi2005reexamination}. The proof does not follow directly from~\cite{hayashi2005reexamination}, and we have to adapt it to the bandit setting. 

\begin{theorem}[informal]
  The cumulative expected regret for any strategy is bounded by
  \begin{align}
    \mathbb{E}\left[ \textup{Regret}(T) \right] = \Omega(\log T),
  \end{align}
  where the expectation is taken over the probability distribution of rewards and actions induced by the learner strategy and also uniformly over the set of pure state environments. 
\end{theorem}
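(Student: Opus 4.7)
The plan is to reduce the regret lower bound to a Bayesian lower bound on pure-state tomography via Haar averaging, leveraging the optimal pure-state estimation result of Hayashi~\cite{hayashi2005reexamination}. The starting observation is that by linearity of expectation the regret decomposes as $\mathbb{E}[\textup{Regret}(T)] = \sum_{t=1}^T \mathbb{E}[\gamma_t]$, where $\gamma_t = 1 - F(\Pi,\Pi_{t})$. Therefore it suffices to prove $\mathbb{E}[\gamma_t] \geq c/t$, with the expectation taken jointly over the strategy's randomness and the Haar prior on pure qubit states.

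The key reduction is that at each round $t$, the probe $\Pi_t$ is by construction a measurable function of the prior outcomes $r_1,\ldots,r_{t-1}$, which were generated by single-copy POVMs performed on $t-1$ independent copies of $|\psi\rangle\!\langle\psi|$. Consequently $\Pi_t$ is itself a valid (adaptive, incoherent, rank-one) tomography estimator of $|\psi\rangle\!\langle\psi|$ built from $t-1$ samples. Hayashi's analysis of Bayes-optimal pure-state estimation establishes that for any such estimator $\hat{\Pi}$ the Haar-averaged infidelity is lower bounded by $\mathbb{E}_{|\psi\rangle \sim \mathrm{Haar}}\, \mathbb{E}[1-F(|\psi\rangle\!\langle\psi|,\hat{\Pi})] \geq c/(t-1)$ for some absolute constant $c > 0$ (for $d = 2$). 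Applied to $\hat{\Pi} = \Pi_t$ this yields $\mathbb{E}[\gamma_t] \geq c/(t-1)$ for $t \geq 2$, and summing over $t$ gives $\mathbb{E}[\textup{Regret}(T)] \geq c \sum_{t=2}^T \frac{1}{t-1} = \Omega(\log T)$.

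The main obstacle is to make the reduction to~\cite{hayashi2005reexamination} rigorous in the bandit context. First, one must verify that the Bayes lower bound carries over even when the estimator is produced \emph{adaptively} from the past outcomes: this is fine because Hayashi's bound is a lower bound on \emph{any} measurement-and-estimation procedure acting on $n$ copies, and our adaptive scheme is a particular instance of such a procedure. Second, the result of~\cite{hayashi2005reexamination} is stated asymptotically in the number of copies; turning it into a non-asymptotic $\Omega(1/n)$ bound valid for all $n \geq 1$ requires a dedicated argument, for instance via a two-point Holevo-style information inequality on a Haar-random pair $\{|\psi\rangle, |\psi'\rangle\}$ with $1-|\langle\psi|\psi'\rangle|^2 \asymp 1/n$, or via the Bayesian Cramér--Rao / Braunstein--Caves bound tailored to the Bures metric on the qubit Bloch sphere. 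Either route gives a Fisher-information-based constant that plugs directly into the summation above.

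Finally, to upgrade the statement so that it holds for \emph{every} strategy (including those that do not satisfy the tomography scaling $\gamma_t = \widetilde O(1/t)$), it is important that our lower bound $\mathbb{E}[\gamma_t] \geq c/t$ is unconditional: a strategy that plays a badly misaligned probe only increases $\gamma_t$, and hence only increases the regret. Thus the $\Omega(\log T)$ bound on the Haar-averaged expected regret holds universally, matching (up to a $\log T$ factor) the $O(\log^2 T)$ upper bound of Theorem~\ref{th:abstract_regret_PSMAQB_abstract}.
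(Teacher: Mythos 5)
Your overall strategy is the same as the paper's: view the probe $\Pi_t$ as the output of a tomography scheme acting on copies of $|\psi\rangle\!\langle\psi|$, invoke the Hayashi average-fidelity bound, and sum the resulting per-round infidelities $\frac{d-1}{t+d}$ into a harmonic series. Two remarks on where your sketch diverges from what is actually needed. First, your worry that the result of~\cite{hayashi2005reexamination} is ``stated asymptotically'' is unfounded: the bound is the exact, non-asymptotic formula $F(\mathcal{T})\le \frac{n+1}{n+d}$ for every $n\ge 1$ (Theorem~\ref{th:fidelity_upper_bound}), which immediately gives the Haar-averaged infidelity lower bound $\frac{d-1}{n+d}$; the detours through a two-point Holevo argument or a Bayesian Cram\'er--Rao bound are unnecessary and would only weaken the constants. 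Second, and more importantly, you misplace the genuine obstacle. It is not adaptivity (which, as you say, is subsumed by allowing arbitrary collective measurements), but the fact that Hayashi's proof is written for tomography schemes with a \emph{finite} set of candidate output states, whereas a bandit policy over the continuous action set $\mathcal{S}_d^*$ induces a general probability measure on predicted states. Citing~\cite{hayashi2005reexamination} directly therefore does not close the argument; the paper has to re-derive the fidelity bound for POVM-valued tomography schemes, which is the content of Lemma~\ref{lem:radon} (existence of an operator-valued Radon--Nikodym density $K(\omega)$ for the POVM with respect to $P_{\mathcal{T},\id}$) and Lemma~\ref{lem:magic} (positivity of that density almost everywhere). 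A minor further point: the paper formalizes the $t$-th action as a tomography on $t$ copies via the POVM $E_t$ from~\cite{lumbreras22bandit} rather than $t-1$ copies as you propose; both are valid since granting the estimator extra copies only strengthens the upper bound on its fidelity, but the $t$-copy version is what the existing POVM construction delivers. With the measure-theoretic generalization supplied, your argument and the paper's coincide.
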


This result is formally derived in Section~\ref{sec:lower_bound}. There it is also generalized to the $d$-dimensional case, in which case the bound is given by $\mathbb{E}\left[ \textup{Regret}(T) \right] = \Omega(d\log (T/d))$. The proof relies on the fact that individual actions of a strategy at time $t\in[T]$ can be viewed as quantum state tomographies using $t$ copies of the state. A relation between the fidelity of these tomographies and the regret of the strategy allows us to convert the fidelity upper bound from~\cite{hayashi2005reexamination} to a regret lower bound. We use measure-theoretic tools to adapt the proof from~\cite{hayashi2005reexamination} to a more general case where the tomography can output an arbitrary distribution of states. We remark that this is a noteworthy result since in~\cite{pmlr-v247-lumbreras24a} they argue how regret lower bound techniques for classical linear bandits fail for noise models with vanishing variance. 

\subsection{Outlook and open problems}

From a quantum state tomography perspective, our work introduces completely new techniques for the adaptive setting, such as the median of means online least squares estimator or the optimistic principle. We expect these techniques to be useful in other quantum learning settings that require adaptiveness, particularly when quantum states serve as resources and must be minimally disturbed during the learning process, such as the state-agnostic work extraction protocols in~\cite{lumbreras25dissipation}. Our algorithm achieves a polylogarithmic regret, which is an exponential
improvement over all previously known algorithms for quantum tomography which can
only achieve such a fidelity by accumulating a linear regret. At a fundamental level, our algorithm goes beyond traditional tomography ideas and shows that is enough to project near the state in order to optimally learn it with minimal disturbance to the samples. From a classical bandit perspective, it is surprising that the setting of learning pure quantum states gives the first non-trivial example of a linear bandit with continuous action sets that achieves polylogarithmic regret. This model motivated our classical work~\cite{pmlr-v247-lumbreras24a} and, jointly with the current work, we establish a bridge between the fields of quantum state tomography and linear stochastic bandits or, more generally, reinforcement learning.

We leave as an open problem the generalization of the algorithm beyond qubits. In particular, our approach relies on the one-to-one correspondence between pure qubit states and the Bloch sphere. While the chosen measurements are specifically designed to work with high-dimensional spheres, for $d > 2$ this correspondence is no longer an isomorphism, and it is not straightforward how to generalize the measurement directions.

We also leave open the question of whether other state tomography algorithms, especially those designed to minimize disturbance, such as weak or gentle measurements, can achieve sublinear regret—particularly polylogarithmic regret. We believe that adaptiveness plays a crucial role in any algorithm aiming to minimize the regret.

\section{The model}\label{sec:model_quantum}

 In this section first connect the notions of disturbance and regret we formally state the PSMAQB problem and make a connection with a linear stochastic bandit problem. Then we define a slightly more general model where the key feature is that the variance of the rewards vanishes with the same behaviour as the PSMAQB problem.

\subsection{Notation}
First, we introduce some basic notation and conventions. Let $[t] = \lbrace 1,2,...,t \rbrace$ for $t\in\mathbb{N}$. For real vectors $x,y\in\mathbb{R}^d$ we denote their inner product as $\langle x, y \rangle = x_1y_1+...+x_dy_d$. Given a real vector $x\in\mathbb{R}^d$ we denote the 2-norm as $\| x \|_2$ and for a real semi-positive definite matrix $A\in\mathbb{R}^{d\times d}$, $A \geq 0$ the weighted norm with $A$ as $\|x \|^2_A = \langle x , A x \rangle$. The set corresponding to the surface of the unit sphere is $\mathbb{S}^{d-1} = \lbrace x\in\mathbb{R}^{d}: \| x \|_2 =1 \rbrace$. For a real symmetric matrix $A\in\mathbb{R}^{d\times d}$ we denote $\lambda_{\max} (A)$, $\lambda_{\min} (A)$ its maximum and minimum eigenvalues respectively. We use the ordering $\lambda_{\min} (A) \leq \lambda_2 (A) , .... , \lambda_{d-1} (A)\leq \lambda_{\max}(A)$ for the $i$-th $\lambda_i (A)$ eigenvalue in increasing order. For a random variable $X$ (discrete or continuous) we denote $\EX [X]$ and $\VX [X] $ its expectation value and variance respectively.  A random variable $X$ is $\sigma$-\textit{subgaussian} if $\forall \lambda\in\mathbb{R},  \EX \left[ \exp(\lambda X)\right]\leq \exp \left(\lambda^2\sigma^2/2\right)$. 

Let $\mathcal{S}_d = \lbrace \rho\in\mathbb{C}^{d\times d}: \Tr(\rho) = 1 , \rho \geq 0\rbrace$ be the set of \textit{quantum states} in a $d$-dimensional Hilbert space $\hil=\mathbb{C}^d$ and $\mathcal{S}^*_d = \lbrace \rho\in\mathcal{S}_d : \rho^2 = \rho \rbrace$ the set of \textit{pure states} or rank-1 projectors.
 We will use the parametrization given in~\cite{byrd2003characterization} of a $d$-dimensional quantum state $\rho_\theta\in\mathcal{S}_d$,
\begin{align}\label{eq:parametrization}
    \rho_\theta = \frac{\mathbb{I}}{d} + \left(\sqrt{\frac{d(d-1)}{2d^2}}\right)\theta\cdot\sigma
\end{align}
where $\theta\in\mathbb{R}^{d^2-1}$, and $\sigma = (\sigma_1,...,\sigma_{d^2-1})$ is a vector of orthogonal, traceless, Hermitian matrices with the normalization condition $\Tr ( \sigma_i \sigma_j ) = 2 \delta_{i,j}$. We will use the subscript $\theta$ in the quantum state $\rho_\theta$ in order to denote the vector of the parametrization~\eqref{eq:parametrization}. In particular the normalization is taken such that $\|\theta\|_2^2\leq 1$ with equality if  $\rho_\theta$ is pure. Note that the parametrization enforces $\rho^\dagger_\theta = \rho_\theta$ and $\Tr(\rho_\theta) = 1$. Also there are some extra conditions on the vector $\theta$ regarding the positivity of the density matrix $\rho_\theta$ but we will not use them. For two quantum states $\rho,\sigma\in\mathcal{S}_d$ the fidelity is $F(\rho,\sigma) = \left(\Tr(\sqrt{\sqrt{\sigma}\rho\sqrt{\sigma}})\right)^2$ and the infidelity $1-F(\rho,\sigma )$. For a Hilbert space $\hil$, the set of linear operators on it will be denoted by $\End(\hil)$. The joint state of a system consisting of $n$ copies of a pure state $\Pi_\theta\in \mathcal{S}_d^*$ is given by the $n$-th tensor power $\Pi_\theta^{\otimes n}\in \End(\hil^{\otimes n})$. Using Dirac notation, we can express $\Pi_\theta=|\psi_\theta\>\!\<\psi_\theta|$ for some normalized $|\psi_\theta\>\in\hil$. Then, the span of all $n$-copy states of the form $|\psi_\theta\>^{\otimes n}$ is called the symmetric subspace of $\hil^{\otimes n}$, denoted by $\hil^{\otimes n}_+$. Its dimension is $D_n=\binom{n+d-1}{d}$. The symmetrization operator $\Pi^+_n\in\End(\hil^{\otimes n})$ is the projector onto $\hil^{\otimes n}_+$. 

\subsection{Cumulative disturbance and regret}

Here we formally show that the notions of disturbance~\eqref{eq:disturbance} and~\eqref{eq:disturbance2} for qubits are indeed controlled by the regret defined as in~\eqref{eq:regret_intro}.

\begin{lemma}
    Consider the notion of disturbance defined in~\eqref{eq:disturbance} then we have that 
    \begin{align}
        \DIST (T) = \Theta (\REGRET (T)),
    \end{align}
    where regret is defined as in~\eqref{eq:regret_intro}.
\end{lemma}

\begin{proof}
    First, without loss of generality we define $ \frac{1}{2} \leq p_t = \langle \psi_t | \rho | \psi_t \rangle$ and using $\psi^c = \mathbb{I} - \psi $ we can directly compute
    \begin{align}
        \mathbb{E}[1 - F(\rho, \tilde{\psi}_t)] = 2 p_t \left( 1 - p_t \right),
    \end{align}
    and using $p_t \leq \lambda_{\max} (\rho) $ we have 
    \begin{align}
        \min_{\psi} \EX[1 - F(\rho, \tilde{\psi})] =  2\lambda_{\max} (\rho) (1 - \lambda_{\max} (\rho)).
    \end{align}
    Then using the identity $1 = x^2 + (1-x)^2 + 2x(1-x)$ we have
    \begin{align}\label{eq:dist_decomp}
        2 p_t \left( 1 - p_t \right) -  2\lambda_{\max} (\rho) (1 - \lambda_{\max} (\rho)) = 2(\lambda_{\max}(\rho) - p_t )(\lambda_{\max}(\rho) + p_t -1).
    \end{align}
    By using $p_t \leq \lambda_{\max} (\rho) \leq 1$ we have 
    \begin{align}
        2(\lambda_{\max}(\rho) - p_t )(\lambda_{\max}(\rho) + p_t -1) \leq  2(\lambda_{\max}(\rho) - p_t ),
    \end{align}
    which leads to $\DIST (T) \leq 2\REGRET (T)$. Then the converse bound follows simply by using $p_t\geq \frac{1}{2}$ in the second factor of~\eqref{eq:dist_decomp}.
\end{proof}

\begin{lemma}
    Consider the notion of disturbance defined in~\eqref{eq:disturbance2}, then we have that 
    \begin{align}
        \DIST^* (T) = \Theta (\REGRET (T)),
    \end{align}
    where regret is defined as in~\eqref{eq:regret_intro}.
\end{lemma}

\begin{proof}
First, without loss of generality we define $ \frac{1}{2} \leq p_t = \langle \psi_t | \rho | \psi_t \rangle$ and then using the clossed formula for the fidelity for qubits and $\rho_t = p_t\psi + (1-p_t)\psi^c_t$, we have
\begin{align}\label{eq:fidelity_postmeasured}
    F(\rho,\rho_t) &= \Tr (\rho\rho_t) + 2\sqrt{\det \rho \det\rho_t} \\
    &= p_t^2 + (1-p_t)^2 + 2\sqrt{\lambda_{\max}(\rho)(1-\lambda_{\max}(\rho))p_t(1-p_t)}.
\end{align}
Using that $p_t \leq \lambda_{\max}(\rho)$ we have
\begin{align}
    F(\rho,\rho_t) &\geq p_t^2 + (1-p_t)^2 + 2p_t(1- \lambda_{\max}(\rho)) \\
    & = 1+2p_t (p_t - \lambda_{\max}(\rho)).
\end{align}
Then we can upper bound the infidelity as
\begin{align}
     1 - F(\rho,\rho_t) \leq (\lambda_{\max}(\rho) - p_t)p_t \leq 2(\lambda_{\max}(\rho) - p_t),
\end{align}
which leads to $\DIST^* (T) \leq 2\REGRET (T)$. For the other bound we can use the geometric mean $2\sqrt{xy}\leq x+y$ in~\eqref{eq:fidelity_postmeasured} and we have
\begin{align}
F(\rho,\rho_t) &\leq p_t^2 + (1-p_t)^2 + p_t (1-p_t) + \lambda_{\max}(\rho)(1- \lambda_{\max}(\rho)) \\
 &= 1 + (p_t-\lambda_{\max}(\rho)) (p_t+\lambda_{\max}(\rho) - 1 ) \leq 1 + (p_t - \lambda_{\max}(\rho)),
\end{align}
where we used $p_t,\lambda_{\max} (\rho)\leq 1$. This gives
\begin{align}
      1 - F(\rho,\rho_t)  \geq \lambda_{\max}(\rho) - p_t ,
\end{align}
which leads to $\DIST^* (T) \geq \REGRET (T)$.
\end{proof}

\subsection{Multi-armed quantum bandit for pure states} 
The model that we are interested in is the general multi-armed quantum bandit model described in~\cite{lumbreras22bandit}[Section 2.3] with the action set being all rank-1 projectors and with pure state environments. For completeness, we state the basic definitions for this particular case for any dimension.
\begin{definition}
 Let $d\in\mathbb{N}$. A  $d$-dimensional \textit{pure state multi-armed quantum bandit} (PSMAQB) is given
by a measurable space $(\mathcal{A}, \Sigma)$, where $\mathcal{A} = \mathcal{S}_d^*$ is the \textit{action set} and $\Sigma$ is a $\sigma$-algebra of subsets of $\mathcal{A}$. The bandit is in an environment, a quantum state $\Pi_\theta\in \mathcal{S}_d^*$, that is unknown.
\end{definition}
The interaction with the PSMAQB is done by a learner that interacts sequentially over $t\in[T]$ rounds with the unknown environment $\Pi_\theta\in\mathcal{S}^*_d$. At each time step $t\in [T]$:
\begin{enumerate}
    \item The learner selects an action $\Pi_{a_t} \in\mathcal{A}$.
    \item Performs a measurement on the unknown environment $\Pi_\theta$ using the two-outcome POVM $\lbrace \Pi_{a_t}, I_{d\times d} - \Pi_{a_t} \rbrace$ and receives a reward $r_t\in\lbrace 0,1 \rbrace$ sampled according to the Born's rule, i.e
    \begin{align}\label{eq:prob_quantum_reward}
    \mathrm{Pr}_{\Pi_\theta} \left(r_t | \Pi_{a_t} \right) = 
    \begin{cases}
    \Tr (\Pi_\theta \Pi_{a_t} ) \quad \text{if} \quad r_t=1,\\
    1-\Tr (\Pi_\theta \Pi_{a_t} ) \quad \text{if} \quad r_t=0, \\
    0 \quad \text{else}.
    \end{cases}
\end{align}
\end{enumerate}
We note that the reward at time step $t$ after selecting $\Pi_{a_t} \in \mathcal{A}$ can be written as
\begin{align}
    r_t =  \Tr (\Pi_\theta \Pi_{a_t} ) + \epsilon_t,
\end{align}
where $\epsilon_t$ is a Bernoulli random variable with values $\epsilon_t \in \lbrace 1- \Tr (\Pi_\theta \Pi_{a_t} ) , - \Tr (\Pi_\theta \Pi_{a_t} )\rbrace$ such that 
\begin{align}\label{eq:variance_psmaqb}
    \EX \left[ \epsilon_t |\mathcal{F}_{t-1} \right] &= 0, \\
    \VX \left[\epsilon_t |\mathcal{F}_{t-1} \right] &= \Tr (\Pi_\theta \Pi_{a_t} ) \left( 1-\Tr (\Pi_\theta \Pi_{a_t} ) \right),
\end{align}
where $\mathcal{F}_{t-1} := \lbrace r_1,\Pi_{a_1},...,r_{t-1},\Pi_{a_{t-1}},\Pi_{a_t} \rbrace $ is a $\sigma$-filtration.

Formally the learner is described by a policy.
\begin{definition}\label{def:policy}
    A policy $\pi$ is a set of conditional probability measures $\lbrace \pi_t \rbrace_{t\in\mathbb{N}}$ on the action set $\mathcal{A}$ of the form
    \begin{align}
        \pi_t ( \cdot |r_1,\Pi_{a_1},...,r_{t-1},\Pi_{a_{t-1}} ) :\Sigma \rightarrow [0,1].
    \end{align}
\end{definition}
Then the policy interacting with the environment $\Pi_\theta$ defines the probability measure over the set of actions and rewards $P_{\Pi_\theta,\Pi}: \left(\Sigma \times \lbrace 0,1\rbrace \right)^{\times T} \rightarrow [0,1]$ as
\begin{align}\label{eq:prob_measure_maqb}
 \int \cdots \int \mathrm{Pr}_{\Pi_\theta} \left(r_T | \Pi_{a_T} \right) \pi_T ( d\Pi_T |r_1,\Pi_{a_1},...,r_{T-1},\Pi_{a_{T-1}} ) \cdots  \mathrm{Pr}_{\Pi_\theta} \left(r_1 | \Pi_{a_1} \right) \pi_1 \left( d\Pi_{a_1} \right),
\end{align}
where the integrals are taken with respect to the corresponding subsets of actions.

The goal of the learner is to efficiently learn a classical description of the environment $\Pi_\theta = \ket{\psi_\theta} \! \bra{\psi_\theta}\in\mathcal{S}_d^*$ while minimizing the disturbance of the post-measured state $\tilde{\psi}_t\in\mathcal{S}^*_d$ that is distributed accordingly to
\begin{align}\label{eq:post_measured_state_dist}
    \mathrm{Pr}\left(R_t | \Pi_{a_t} \right) = \begin{cases}
       \Tr\left( \Pi_\theta \Pi_{a_t} \right) \quad \text{if } \quad R_t = \Pi_{a_t} \\
       1 - \Tr\left( \Pi_\theta \Pi_{a_t} \right) \quad \text{if } \quad R_t = \Pi^c_{a_t} \\
       0 \quad \text{else} ,
    \end{cases} 
\end{align}
where $ \Pi_{a_t} = |\psi_{a_t}\rangle \! \bra{\psi_{a_t}} $ and
\begin{align}
    \Pi^c_{a_t} = |\psi^c_{a_t}\rangle \! \bra{\psi^c_{a_t}} , \quad  |\psi^c_{a_t}\rangle = \frac{\ket{\psi}- \langle  \psi_{a_t}| \psi \rangle \ket{\psi_{a_t}}}{\sqrt{1 - |\langle \psi | \psi_{a_t} \rangle |^2}} .
\end{align}

The task of the learner is captured by minimizing the cumulative regret, our figure of merit that is defined as follows.
\begin{definition}
Given a $d$-dimensional pure state multi-armed quantum bandit, a policy $\pi$, and unknown environment $\Pi_\theta \in \mathcal{S}^*_d$ and $T\in\mathbb{N}$, the $\textit{cumulative regret}$ is defined as
\begin{align}\label{eq:def_regret}
   \textup{Regret}(T,\pi,\Pi_{\theta})   &:= \sum_{t=1}^T 1 - \Tr (\Pi_\theta , \Pi_{a_t}) .
\end{align}
\end{definition}
We note that the regret quantifies the cumulative infidelity between the unknown environment and the post-measured state. And this notion of regret is consistent with the one introduced in the introduction~\eqref{eq:regret_intro} since $\lambda_{\max}(\Pi_\theta) = 1$.

Note that indeed minimizing the regret~\eqref{eq:def_regret} implies selecting actions $\Pi_{a_t}$ that have high fidelity respect to the environment (learning the environment) but at the same time minimizing the cumulative infidelity of the post-measured states. In general the goal of the learner is to minimize the \textit{expected cumulative regret} that is simply defined as $\EX_{\Pi_\theta} [\text{Regret}(T,\pi,\Pi_{\theta}) ]$ where the expectation $\EX_{\Pi_\theta}$ is taken over the probability measure~\eqref{eq:prob_measure_maqb}. When the context is clear, we will use the notation $\text{Regret} ( T )$. Moreover the expression of the regret~\eqref{eq:def_regret} coincides with the notion of regret introduced for general multi-armed bandits~\cite[Section 2.3]{lumbreras22bandit}. For that reason we refer to the PSMAQB problem as the task of finding a policy that minimizes the expected regret $\EX_{\Pi_\theta} [\text{Regret}(T,\pi,\Pi_{\theta}) ]$. Minimizing the regret means achieving sublinear regret on $T$ since $\text{Regret}(T) \leq T$ holds for any policy.

\subsection{Classical model}\label{sec:classical_models}
In order to study the PSMAQB it is helpful to study it using the linear stochastic bandit framework. The idea will be to express the actions and unknown quantum states as real vectors using the parametrization~\eqref{eq:parametrization}.

In the linear stochastic bandit model, the action set is a subset of real vectors i.e $\mathcal{A} \subseteq \mathbb{R}^d$, and the reward at time step $t\in[T]$ after selecting action $a_t\in\mathcal{A}$ is given by
\begin{align}\label{eq:reward_linearbandit}
    r_t = \langle a_t , \theta \rangle + \epsilon_t
\end{align}
where $\theta\in\mathbb{R}^d$ is the unknown parameter and $\epsilon_t$ is some bounded $\sigma-$subgaussian noise that in general can depend on $\theta$ and $a_t$. The regret for this model is given by
\begin{align}\label{eq:linear_regret}
    \text{Regret}_{cl}(T,\pi,\theta) := \sum_{t=1}^T \max_{a\in\mathcal{A}}\langle \theta , a \rangle - \langle \theta , a_t \rangle ,
\end{align}
where the policy $\pi$ is defined analogously to Definition~\ref{def:policy}. We used the subscript $cl$ to differentiate between quantum and classical model.

In order to express the PSMAQB model as a linear stochastic bandit we can use the parametrization~\eqref{eq:parametrization} and express the expected reward for action $\Pi_{a_t}\in\mathcal{S}^*_d$ as
\begin{align}\label{eq:identity_trace}
   \Tr (\Pi_{a_t}\Pi_\theta ) = \frac{1}{d}\left( 1 + \left(d-1 \right) \langle a_t, \theta \rangle \right).
\end{align}
Inverting the above expression we have
\begin{align}\label{eq:inner_product}
    \langle a_t, \theta \rangle = \frac{d\Tr(\Pi_\theta\Pi_{a_t})-1}{d-1}.
\end{align}
Let's quickly revisit the regret expression and use the above identities in order to connect the quantum and classical versions of the regret. We denote $\Pi_{a^*} = \argmax_{\Pi\in\mathcal{A}} \Tr (\Pi \Pi_\theta )$ the optimal action and recall that $1=\Tr (\Pi_{a^*} \Pi_\theta )$. Then we have
\begin{align*}
    \text{Regret}(T,\pi,\Pi_\theta ) &= \sum_{t=1}^T \Tr (\Pi_{a^*} \Pi_\theta ) - \Tr (\Pi_{a_t} \Pi_\theta ) 
    \\&= \frac{d-1}{d} \sum_{t=1}^T \langle \theta, a^* - a_t \rangle.
\end{align*}
Note that by the normalization~\eqref{eq:parametrization} we have that for $\rho_\theta$ and  $\Pi_{a_t}$ the corresponding real vecotrs are normalized $\| \theta \|_2 = \| a_t \| =1$. Thus, since $a^* = \theta$ the regret can be written as
\begin{align}
    \text{Regret}(T,\pi,\Pi_\theta ) &= \frac{d-1}{d} \sum_{t=1}^T \left( 1-\langle \theta, a_t \rangle \right) \\ &= \frac{d-1}{2d} \sum_{t=1}^T \| \theta - a_t \|_2^2 .
\end{align}

Now we want to formulate a classical bandit such that the environment and actions are given by the real vectors that parameterize the quantum states~\eqref{eq:parametrization}. In order to have an expected linear reward that is linear with respect to $\theta$ and $a_t$ it is sufficient to define a renormalized reward as
\begin{align}\label{eq:classical_renormalizedreward}
\tilde{r}_t = \frac{d r_t - 1}{d-1} \in \left\lbrace 1,\frac{-1}{d-1} \right\rbrace,
\end{align}
where we used the reward of the quantum model $r_t\in\lbrace 0,1\rbrace$ given by~\ref{eq:prob_quantum_reward}. Using $\EX[r_t|\mathcal{F}_{t-1}] = \Tr(\Pi_{a_t}\rho_\theta)$ and~\eqref{eq:identity_trace} it is easy to see that
\begin{align}
    \EX[\tilde{r}_t |\mathcal{F}_{t-1}] = \langle \theta , a_t \rangle,
\end{align}
where naturally we use $\mathcal{F}_{t-1} = \lbrace \tilde{r}_1, a_1,...,\tilde{r}_{t-1},a_{t-1},a_t \rbrace.$
Thus, we can write the reward in the form~\eqref{eq:reward_linearbandit}
\begin{align*}\label{eq:variance_linear_classicalquantum}
    \tilde{r_t} = \langle \theta , a_t \rangle + \epsilon_t, \quad \EX [\epsilon_t |\mathcal{F}_{t-1}] = 0 , \\ \VX [\epsilon_t |\mathcal{F}_{t-1}] = \left(1 - \langle \theta , a_t \rangle \right)\left(1 + (d-1)\langle\theta , a_t \rangle\right),
\end{align*}
where the expectation and variance follow from a direct calculation. Then we can study a $d$-dimensional PSMAQB as a linear stochastic bandit choosing the action set
\begin{align}
    \mathcal{A}^{\text{quantum}}_{d} := \lbrace a\in\mathbb{R}^{d^2 - 1} : \Pi_{a}\in\mathcal{S}^*_d \rbrace
\end{align}
with unknown parameter $\theta\in\mathbb{R}^{d^2 - 1}$ such that $\Pi_\theta \in \mathcal{S}^*_d$. The regret of this linear model is given by $\text{Regret}_{cl} = \frac{1}{2}\sum_{t=1}^T \| \theta - a_t \|_2^2$ and we have the following relation with the quantum model:
\begin{align}\label{eq:regret_relation_classicalquantum}
    \text{Regret}(T,\pi,\Pi_\theta) = \frac{d-1}{d}\text{Regret}_{cl}(T,\pi,\theta),
\end{align}
where we take the same strategy $\pi$ in both sides since we can identify the actions of both bandits through the parametrization~\eqref{eq:parametrization} and the relation between rewards given by~\eqref{eq:classical_renormalizedreward}. When the context is clear we will simply use $\text{Regret}(T)$ for both quantum and classical model.

\subsection{Linear bandit with linearly vanishing variance noise}\label{sec:linear_bandit_vanishing_variance}

In~\cite{pmlr-v247-lumbreras24a} some of the present authors introduced the framework of stochastic linear bandits with linear vanishing noise where the setting is a linear bandit with action set $\mathcal{A} = \mathbb{S}^d$, unknown parameter $\theta\in\mathbb{S}^d$ and reward $r_t = \langle \theta , a_t \rangle + \epsilon_t$ such that $\epsilon_t$ is $\sigma_t$-subgaussian with $\EX [\epsilon_t | \mathcal{F}_{t-1}] = 0$ and the property of vanishing noise $\sigma^2_t \leq 1 - \langle \theta , a_t \rangle^2$. In order to study a PSMAQB we will relax the condition on the subgaussian noise and we will replace it by the following condition on the noise
\begin{align}\label{eq:noise_condition_variance_vanishing}
  \EX \left[ \epsilon_t | \mathcal{F}_{t-1 } \right] = 0, \quad \VX \left[ \epsilon_t | \mathcal{F}_{t-1} \right] \leq 1 - \langle \theta , a_t \rangle^2 .
\end{align}
As in the classical model of the previous section using that $\max_{a\in\mathcal{A}} \langle \theta , a \rangle =1$ we have that the regret is given by
\begin{align}\label{eq:regret_variance_model}
    \text{Regret}(T) = \sum_{t=1}^T 1-\langle \theta , a_t \rangle = \frac{1}{2}\sum_{t=1}^T \| \theta - a_t \|_2^2.
\end{align}
We note that finding a strategy that minimizes regret for the above model will also work for a $d=2$ PSMAQB with unknown $\Pi_{\theta}\in\mathcal{S}^*_2$ using the relations of last sections since
\begin{align}
    \mathcal{A}_2^{\text{quantum}} = \lbrace a\in\mathbb{R}^3: \|a \|_2 = 1 \rbrace = \mathbb{S}^2 ,
\end{align}
and the variance of the PSMAQB~\eqref{eq:variance_linear_classicalquantum} fullfills the relation~\eqref{eq:noise_condition_variance_vanishing}.

\section{Algorithm for bandits with linearly vanishing variance noise}\label{sec:classical_LinUCBVNN}
In this Section we are going to present an algorithm for the linear bandit model explained in Section~\ref{sec:linear_bandit_vanishing_variance} that is based on the algorithm \textsf{LINUCB-VN} studied in~\cite{pmlr-v247-lumbreras24a} for linear bandits with linearly vanishing noise. Later we will show how to use this algorithm for the qubit PSMAQB problem. 

\subsection{Median of means for an online least squares estimator}\label{sec:MoMLSE}
First we discuss the medians of means method for the online linear least squares estimator introduced in~\cite{heavy_tail_linear_optimal}. We are going to use this estimator later in order to design a strategy that minimizes the regret for the model introduced in Section~\ref{sec:linear_bandit_vanishing_variance}. The reason we need this estimator is that in the analysis of our algorithm we need concentration bounds for linear least squares estimators where the random variables have bounded variance and a possibly unbounded subgaussian parameter. The condition of bounded variance is weaker than the usual assumption of bounded subgaussian noise, however we can recover similar concentration bounds of the estimator if we implement a median of means.

In order to build the median of means online least squares estimator for linear bandits we need to sample $k$ independent rewards for each action. Specifically given an action set $\mathcal{A}\subset\mathbb{R}^d$, an unknown parameter $\theta\in\mathbb{R}^d$, at each time step $t$ we select an action $a_t\in\mathcal{A}$ and sample $k$ independent rewards using $a_t$ where the outcome rewards are distributed as
\begin{align}
    r_{t,i} = \langle \theta , a_t \rangle + \epsilon_{t,i} \quad \text{for }i\in[k],
\end{align}
for some noise such that $\EX [\epsilon_{t,i} | \mathcal{F}_{t-1} ] = 0$. We refer to $k$ as the number of subsamples per time step. Then at time step $t$ we define $k$ least squares estimators as
\begin{align}
    \widetilde{\theta}_{t,i} = V_t^{-1} \sum_{s=1}^t r_{s,i} a_s \quad \text{for }i\in[k],
\end{align}
where $V_t$ is the design matrix defined as
\begin{align}
    V_t = \lambda \mathbb{I} + \sum_{s=1}^t a_s a_s^{\mathsf{T}},
\end{align}
with $\lambda > 0$ being a parameter that ensures invertibility of $V_t$. We note that the design matrix is independent of $i$. Then the median of means for least squares estimator (MOMLSE) is defined as
    \begin{align}\label{eq:linearMOM}
        \widetilde{\theta}_{t}^{\text{\tiny MoM}} := \tilde{\theta}_{t,k^*} \quad \text{where }k^* = \argmin_{j\in[k]} y_j ,
    \end{align}
where     
\begin{align}
        y_j = \text{median}\lbrace \|\tilde{\theta}_{t,j} - \tilde{\theta}_{t,i} \|_{V_t}: i\in [k]/j \rbrace \quad \text{for } j \in [k].
    \end{align}

Using the results in~\cite{heavy_tail_linear_optimal} we have that the above estimator has the following concentration property around the true estimator.
\begin{lemma}[Lemma 2 and 3 in~\cite{heavy_tail_linear_optimal}]\label{lem:concentration_mom}
Let $\widetilde{\theta}_t^{\textup{\tiny MoM}}$ be the MOMLSE defined~\eqref{eq:linearMOM} in with $k$ subsamples with  $\lbrace r_{s,i}\rbrace_{(s,i)\in [t]\times [k]} $ rewards and corresponding actions $\lbrace a_{s} \rbrace_{s\in [t]} $. Assume that the noise of all rewards has bounded variance, i.e $\EX \left[ \epsilon^2_{s,i} | \mathcal{F}_{t-1} \right] \leq 1$ for all $s\in[t]$ and $i\in [ k ]$. Then we have
    \begin{align}
    \mathrm{Pr}\left( \|\theta - \widetilde{\theta}_t^{\text{\tiny MoM}} \|^2_{V_t} \leq 9\left(\sqrt{9d} + \lambda \| \theta \|_2 \right)^2 \right)\geq 1 - \exp \left( \frac{-k}{24}\right).
\end{align}

\end{lemma}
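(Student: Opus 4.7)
The plan is to follow the classical Lugosi--Mendelson recipe for median-of-means confidence boosting: first show each individual estimator $\widetilde\theta_{t,i}$ lands within some constant-probability radius of $\theta$, then argue that the selection rule in~\eqref{eq:linearMOM} promotes this constant-probability statement to a tail that is exponentially small in $k$.

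Starting from the normal equations, a direct algebraic manipulation gives
\begin{equation*}
\widetilde\theta_{t,i} - \theta = -\lambda V_t^{-1}\theta + V_t^{-1} S_{t,i}, \qquad S_{t,i} := \sum_{s=1}^t \epsilon_{s,i}\, a_s,
\end{equation*}
so the triangle inequality in $\|\cdot\|_{V_t}$ reduces the problem to (i) a deterministic bias term bounded via $V_t \succeq \lambda \mathbb{I}$ by $\lambda \|V_t^{-1}\theta\|_{V_t} \leq \sqrt{\lambda}\|\theta\|_2$, and (ii) a stochastic noise term $\|V_t^{-1}S_{t,i}\|_{V_t}^2 = S_{t,i}^{\mathsf T} V_t^{-1} S_{t,i}$. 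I would aim for the in-expectation bound $\EX \|V_t^{-1} S_{t,i}\|_{V_t}^2 \leq d$ by exploiting the martingale-difference property $\EX[\epsilon_{s,i}\mid \mathcal{F}_{s-1}]=0$ and conditional second moment $\leq 1$, together with the trace identity $\sum_s a_s^{\mathsf T} V_t^{-1} a_s = \Tr(V_t^{-1}(V_t - \lambda\mathbb{I})) \leq d$. Markov's inequality then yields $\|V_t^{-1} S_{t,i}\|_{V_t} \leq \sqrt{9d}$ with probability $\geq 8/9$, so each $\widetilde\theta_{t,i}$ satisfies $\|\theta - \widetilde\theta_{t,i}\|_{V_t} \leq R := \sqrt{9d} + \lambda\|\theta\|_2$ with probability at least $8/9$.

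Because the subsample noises $\epsilon_{s,1},\dots,\epsilon_{s,k}$ are independent across the $k$ copies, the Bernoulli indicators of the good events $G_i := \{\|\theta - \widetilde\theta_{t,i}\|_{V_t} \leq R\}$ are mutually independent with $\mathrm{Pr}(G_i^c) \leq 1/9$. Hoeffding's inequality then forces a strict majority of the $G_i$ to hold with probability at least $1 - \exp(-k/24)$. Conditioned on this, a standard triangle-inequality argument in the $V_t$-norm closes the proof: for any good $i$ and any other good $j$ we have $\|\widetilde\theta_{t,i} - \widetilde\theta_{t,j}\|_{V_t} \leq 2R$, so $y_i \leq 2R$; conversely, if $\|\theta - \widetilde\theta_{t,i}\|_{V_t} > 3R$, then by the reverse triangle inequality $\|\widetilde\theta_{t,i} - \widetilde\theta_{t,j}\|_{V_t} > 2R$ for every good $j$, which form a majority, forcing $y_i > 2R$. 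The minimiser $k^\ast$ therefore satisfies $\|\theta - \widetilde\theta_{t,k^\ast}\|_{V_t} \leq 3R$, and squaring produces the claimed inequality.

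The main technical obstacle is the in-expectation bound on the noise term. Although the martingale structure heuristically makes cross expectations $\EX[\epsilon_{s,i}\epsilon_{s',i}\, a_s^{\mathsf T} V_t^{-1} a_{s'}]$ vanish, $V_t^{-1}$ depends on actions $a_{s'}$ with $s'>s$ which in turn depend adaptively on the past noises $\epsilon_{s,\cdot}$, so a naive swap of expectation and $V_t^{-1}$ is illegal. A careful sequential conditioning, or a self-normalised inequality in the spirit of Abbasi-Yadkori adapted to the bounded-variance (rather than subgaussian) regime, is required; this is exactly the content of the cited lemmas in~\cite{heavy_tail_linear_optimal}, on which the argument ultimately rests.
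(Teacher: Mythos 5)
The paper offers no proof of this lemma to compare against: it is imported wholesale as Lemmas 2 and 3 of the cited reference, so your attempt can only be judged on its own terms. On those terms, your outline is the standard Lugosi--Mendelson median-of-means route and is, in spirit, what the cited source does: a constant-probability confidence radius $R=\sqrt{9d}+\lambda\|\theta\|_2$ for each individual estimator via Markov's inequality, a Hoeffding step promoting ``each good with probability $8/9$'' to ``a majority good with probability $1-\exp(-k/24)$'' (the constant checks out, since $\exp(-2k(7/18)^2)\le\exp(-k/24)$), and the $R\to 3R$ triangle-inequality argument for the selection rule~\eqref{eq:linearMOM}, which squares to exactly the stated $9\left(\sqrt{9d}+\lambda\|\theta\|_2\right)^2$. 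That last deterministic step is written correctly.

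The genuine gap is the one you flag yourself, and it is not a side issue: the bound $\EX\left[S_{t,i}^{\mathsf T}V_t^{-1}S_{t,i}\right]\le d$ is the entire probabilistic content of the lemma, and your proposal does not establish it. As you observe, the trace identity $\sum_s a_s^{\mathsf T}V_t^{-1}a_s=\Tr\left(V_t^{-1}(V_t-\lambda\mathbb{I})\right)\le d$ only closes the argument if the cross terms $\EX\left[\epsilon_{s,i}\epsilon_{s',i}\,a_s^{\mathsf T}V_t^{-1}a_{s'}\right]$ vanish, and they need not, because $V_t^{-1}$ depends on actions chosen after time $\min(s,s')$, which in a bandit are themselves functions of earlier noise. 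Deferring this step back to the cited lemmas makes the proof circular as a self-contained argument; to complete it you would need either the careful sequential conditioning you allude to or a self-normalised concentration argument adapted to the bounded-variance setting. A smaller point: your bias bound is $\lambda\|V_t^{-1}\theta\|_{V_t}\le\sqrt{\lambda}\,\|\theta\|_2$, which only fits inside the claimed radius $\sqrt{9d}+\lambda\|\theta\|_2$ when $\lambda\ge 1$; for $\lambda<1$ you would have to carry $\sqrt{\lambda}\|\theta\|_2$ through instead, which changes the stated constant.
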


We will use a slight modification of the above result with a weighted least squares estimator like the one used in~\cite{pmlr-v247-lumbreras24a}. The weights will be related to a variance estimator of the noise for action $a\in\mathcal{A}$ that at each time step $t$ can be generally defined as
\begin{align}\label{eq:variance_estimator}
    \hat{\sigma}^2_t : \mathcal{H}_{t-1}\times A \rightarrow \mathbb{R}_{>0},
\end{align}
where $\mathcal{H}_{t-1} = \lbrace r_{s,i}\rbrace_{(s,i)\in [t-1]\times [k]} \cup \lbrace a_{s} \rbrace_{s \in [t-1]}$ contains the past information of rewards and actions played. For our purposes we will use only the information of the past actions and in order to simplify notation we will use $\hat{\sigma}^2_t (a)$ to denote an estimator of the variance for the reward associated action $a\in\mathcal{A}$ with the information collected up to time step $t-1$. Then the corresponding weighted versions with $k$ subsamples are defined as
\begin{align}\label{eq:weighted_lse}
    \widetilde{\theta}_{t,i} = V_t^{-1} \sum_{s=1}^t \frac{1}{\hat{\sigma}^2_s (a_s)} r_{s,i} a_s \quad \text{for }i\in[k],
\end{align}
with the weighted design matrix
\begin{align}\label{eq:weighted_design}
    V_t = \lambda \mathbb{I} + \sum_{s=1}^t \frac{1}{\hat{\sigma}^2_s (a_s)} a_s a_s^{\mathsf{T}}.
\end{align}
Then the weighted version of the median of means linear estimator is defined analogously to~\eqref{eq:linearMOM} with the corresponding weighted versions~\eqref{eq:weighted_lse}\eqref{eq:weighted_design} and we will denote it as $\widetilde{\theta}_t^{\text{\tiny wMOM}}$.
In our algorithm analysis we will use the following analogous concentration bound under the condition that the estimators $\hat{\sigma}^2_t$ overestimate the true variance.
\begin{corollary}\label{cor:concentration_wmom}
    Let $\widetilde{\theta}_t^{\textup{\tiny wMOM}}$ be the weighted version of the MOMLSE with $k$ subsamples, $\lbrace r_{s,i}\rbrace_{(s,i)\in [t]\times [k]} $ rewards with corresponding actions $\lbrace a_{s} \rbrace_{s\in [t]} $ and variance estimator $\hat{\sigma}^2_t$. Define the following event
    \begin{align}\label{eq:gt_event}
        G_t : = \lbrace \big( \mathcal{H}_{t-1}, a_t \big) : \VX [\epsilon_{s,i}] \leq \hat{\sigma}^2 (a_s )\ \forall s,i\in [t]\times [k  ]\rbrace.
    \end{align}
    Then we have
    \begin{align}
    \mathrm{Pr}\left( \|\theta - \widetilde{\theta}_t^{\textup{\tiny wMOM}} \|^2_{V_t} \leq \beta \mid G_t \right)\geq 1 - \exp \left( \frac{-k}{24}\right),
    \end{align}
    where 
    \begin{align}\label{eq:beta_constant}
        \beta := 9\left(\sqrt{9d} + \lambda \| \theta \|_2 \right)^2 .
    \end{align}
\end{corollary}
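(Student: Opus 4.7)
The plan is to reduce Corollary~\ref{cor:concentration_wmom} to Lemma~\ref{lem:concentration_mom} by a rescaling trick. For each $s\in[t]$, introduce the rescaled actions $\tilde a_s := a_s/\hat\sigma_s(a_s)$ and rescaled rewards $\tilde r_{s,i}:=r_{s,i}/\hat\sigma_s(a_s)$. Since $\hat\sigma^2_s$ depends only on $\mathcal H_{s-1}$ and $a_s$ by~\eqref{eq:variance_estimator}, this rescaling is $\sigma(\mathcal F_{s-1},a_s)$-measurable and preserves the linear model: one obtains $\tilde r_{s,i}=\langle\theta,\tilde a_s\rangle+\tilde\epsilon_{s,i}$ with $\tilde\epsilon_{s,i}:=\epsilon_{s,i}/\hat\sigma_s(a_s)$, which remains centered conditionally on $\mathcal F_{s-1}$.

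A direct substitution shows that the \emph{unweighted} design matrix $\lambda\mathbb{I}+\sum_{s=1}^t \tilde a_s\tilde a_s^{\mathsf T}$ built from the rescaled actions coincides with the weighted design matrix $V_t$ in~\eqref{eq:weighted_design}, and the \emph{unweighted} least-squares estimator built from the rescaled data $(\tilde r_{s,i},\tilde a_s)$ coincides with the weighted estimator $\tilde\theta_{t,i}$ in~\eqref{eq:weighted_lse}. Because the MoM aggregation~\eqref{eq:linearMOM} depends on the individual estimators and on $V_t$ only through the norm $\|\cdot\|_{V_t}$, the weighted MoM estimator $\tilde\theta_t^{\text{\tiny wMOM}}$ is identically equal, as a random variable, to the unweighted MoM estimator associated with the rescaled problem.

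It then suffices to verify the hypothesis of Lemma~\ref{lem:concentration_mom} for the rescaled noise, namely $\EX[\tilde\epsilon_{s,i}^2\mid\mathcal F_{s-1}]\le 1$. On the event $G_t$ defined in~\eqref{eq:gt_event} this is immediate: by definition $\VX[\epsilon_{s,i}\mid\mathcal F_{s-1}]\le\hat\sigma^2_s(a_s)$ for every $(s,i)\in[t]\times[k]$, so dividing by $\hat\sigma^2_s(a_s)$ produces the required unit variance bound. Invoking Lemma~\ref{lem:concentration_mom} on the rescaled process then yields $\|\theta-\tilde\theta_t^{\text{\tiny wMOM}}\|_{V_t}^2\le 9(\sqrt{9d}+\lambda\|\theta\|_2)^2=\beta$ with probability at least $1-\exp(-k/24)$, conditional on $G_t$.

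The only delicate point I anticipate is the conditioning step: Lemma~\ref{lem:concentration_mom} assumes a pathwise bound on conditional variances of the noise, whereas $G_t$ is an $\mathcal F_t$-measurable event on which this bound holds only after conditioning. Since the proof of Lemma~\ref{lem:concentration_mom} is driven by a martingale/union-bound argument that uses the variance hypothesis at each step only as a bound on $\EX[\cdot\mid\mathcal F_{s-1}]$, one can either apply the lemma pathwise to realizations in $G_t$ or couple the original process to a truncated/dominated one whose conditional noise variances are deterministically bounded by $\hat\sigma^2_s(a_s)$ and then restrict to $G_t$. I expect this bookkeeping to be the main (and only) nonalgebraic step of the argument.
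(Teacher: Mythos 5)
Your proposal is correct and follows essentially the same route as the paper's own proof: the paper likewise rescales rewards and actions by $1/\hat{\sigma}_s(a_s)$, observes that the weighted estimator and design matrix become the unweighted ones for the rescaled data, checks that on $G_t$ the rescaled noise has conditional variance at most $1$, and invokes Lemma~\ref{lem:concentration_mom}. Your closing remark about the conditioning on the $\mathcal{F}_t$-measurable event $G_t$ identifies a bookkeeping point that the paper's one-line proof also passes over without elaboration, so you are not missing anything the paper supplies.
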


\begin{proof}
    The result follows from applying Lemma~\ref{lem:concentration_mom} to the sequences of re-normalized rewards $\lbrace \frac{r_{s,i}}{\hat{\sigma}_s (a_s)}\rbrace_{(s,i)\in [t]\times [k]} $  and actions $\lbrace \frac{a_{s,i}}{\hat{\sigma}_s (a_s)} \rbrace_{s\in [t]} $. We only need to check that the sequence $\lbrace \frac{\epsilon_{s,i}}{\hat{\sigma}_s (a_s)}\rbrace_{(s,i)\in [t]\times [k]} $ has finite variance. Conditioning with the event $G_t$ and the fact that by definition $\hat{\sigma}^2_s (a_s)$ only depend on the past $s-1$ action and rewards we have that the re-normalized noise has bounded variance since
    \begin{align}
        \EX \left[ \left(\frac{\epsilon_{s,i}}{\hat{\sigma}_s (a_s)} \right)^2 \Bigg| \mathcal{F}_{t-1 }\right] = \frac{1}{\hat{\sigma}^2_s (a_s)}\EX [ \epsilon^2_{s,i} | \mathcal{F}_{t-1} ] = \frac{\VX [\epsilon_{s,i}]}{\hat{\sigma}^2_s (a_s)}\leq 1.
    \end{align}
\end{proof}

\subsection{Algorithm}\label{sec:linucb_vvn}
The algorithm that we design for linear bandits with linearly variance vanishing noise is \textsf{LinUCB-VVN} (\textsf{LinUCB} vanishing variance noise) stated in Algorithm~\ref{alg:linucb_vn_var}. The algorithm updates the actions in batches of lenght $2k(d-1)$. For every batch it outputs $2(d-1)$ actions and samples $k$ independent reward with each action. We use a slightly abuse of notation and label each batch with $t$. For each batch $t\geq 1$ the actions are updated as

\begin{align}\label{eq:action_general_update}
    a^\pm_{t,i} := \frac{\widetilde{a}^\pm_{t,i}}{\| \widetilde{a}^\pm_{t,i}\|_2}, \quad  \tilde{a}^\pm_{t,i} = \frac{\widetilde{\theta}^{\text{wMoM}}_{t-1}}{\| \widetilde{\theta}^{\text{wMoM}}_{t-1} \|_2} \pm \frac{1}{\sqrt{\lambda_{\min}(V_{t-1})}} v_{t-1,i} ,
\end{align}
for $i\in [d-1 ]$, $v_{t-1.i}$ is the normalized eigenvector of $V_{t-1}$ with eigenvalue $\lambda_i (V_{t-1} )$ and $\widetilde{\theta}^{\text{wMoM}}_t$ is the weighted MOMLSE defined as in Section~\ref{sec:MoMLSE} that is build with the $k$ sampled rewards of each action. The design matrix $V_t$ is updated as
\begin{align}\label{eq:design_update}
    V_t = V_{t-1} + \omega (V_{t-1} ) \sum_{i=1}^{d-1} \left( a^+_{t,i} (a^+_{t,i})^{\mathsf{T}} + a^-_{t,i} (a^-_{t,i})^\mathsf{T} \right)
\end{align}
where the weights $\omega$ and variance estimator are chosen as
\begin{align}\label{eq:weight_choice}
    \omega (V_{t-1}) := \frac{\sqrt{\lambda_{\max}(V_{t-1})}}{12\sqrt{d-1}\beta} , \quad \hat{\sigma}^2_t (a^\pm_{t,i}) := \frac{1}{\omega (V_{t-1})}.
\end{align}
We note that the definition for $\hat{\sigma}^2_t (a^\pm_{t,i})$ fulfills the definition of variance estimator~\eqref{eq:variance_estimator} stated in the previous section since it only depends on the past history $\mathcal{H}_{t-1}$.
\begin{algorithm}
	\caption{\textsf{LinUCB-VVN}} 
	\label{alg:linucb_vn_var}
 
        Require: $\lambda_0\in\mathbb{R}_{>0}$, $k\in\mathbb{N}$,  $\omega: \text{P}^d_+ \rightarrow \mathbb{R}_{\geq 0}$
        
        Set initial design matrix $V_0 \gets \lambda_0\mathbb{I}_{d\times d}$ 
        
        Choose initial estimator ${\theta}_0\in\mathbb{S}^d$ for $\theta$ at random 
        
        \For{$t=1,2,\cdots$}{
            \vspace{1mm}
            \textit{Optimistic action selection}
            \vspace{1mm}
            
            \For{$i = 1,2,\cdots d-1$}{    
                Select actions $a^+_{t,i}$ and $a^-_{t,i}$ according to Eq.~\eqref{eq:action_general_update}
                
                \vspace{1mm}
                \textit{Sample $k$ independent rewards for each $a^\pm_{t,i}$}
                \vspace{1mm}
                
                \For{$j=1,...,k$}{
                    Receive associated rewards $r^+_{t,i,j}$ and $r^-_{t,i,j}$
                }
            }
            
            \vspace{1mm}
            \textit{Update variance estimator for $a^+_{t,i}$}
            \vspace{1mm}
            
            $\hat{\sigma}^2_t \gets \frac{1}{\omega ( V_{t-1}(\lambda_0 ) ) }$ for $t\geq 2$ or $\hat{\sigma}^2_t \gets 1$ for $t=1$.

            \vspace{1mm}
            \textit{Update design matrix}
            \vspace{1mm}
            
            $V_{t} \gets V_{t-1} + \frac{1}{\hat{\sigma}_t^2} \sum_{i=1}^{d-1} \left(a^+_{t,i}  (a^+_{t,i})^{\mathsf{T}}  +  a^-_{t,i}  (a^-_{t,i})^\mathsf{T} \right)$

            \vspace{1mm}
            \textit{Update LSE for each subsample}
            \vspace{1mm}

            \For{$j=1,2,...,k$:}{
                $\widetilde{\theta}_{t,j}^\text{w} \gets V_t^{-1}  \sum_{s = 1}^t \frac{1}{\hat{\sigma}_t^2} \sum_{i=1}^{d-1} (a^+_{s,i} r^+_{s,i,j} + a^-_{s,i} r^-_{s,i,j} ) $
            }   
            Compute $\widetilde{\theta}_t^{\text{\tiny wMOM}}$ using $\lbrace \widetilde{\theta}_{t,j}^\text{w} \rbrace_{j=1}^k$
        }
\end{algorithm}

\subsection{Regret analysis}
In this Section we present the analysis of the regret for Algorithm~\ref{alg:linucb_vn_var}. The analysis is similar to the \textsf{LinUCB-VN} presented in~\cite{pmlr-v247-lumbreras24a}[Appendix C.1]. Thus, we focus on the changes respect to \textsf{LinUCB-VN} and although we present a complete proof we refer to~\cite{pmlr-v247-lumbreras24a} for more detailed computations. The main result we use from~\cite{pmlr-v247-lumbreras24a} is a theorem that quantifies the growth of the maximum and minimum eigenvalues of the design matrix $V_t$~\eqref{eq:design_update}.

\begin{theorem}[Theorem 3 in~\cite{pmlr-v247-lumbreras24a}]\label{th:eigenvalue_scaling}
    Let $\lbrace c_t \rbrace_{t=0}^\infty \subset \mathbb{S}^{d-1}$ be a sequence of normalized vectors and $\omega: \textup{P}^d_+ \rightarrow \mathbb{R}_{\geq 0}$ a function such that 
    \begin{align}
        \omega (X) \leq C\sqrt{\| X \|_\infty},
    \end{align}
     for a constant $C > 0$ and any $X\in \textup{P}^d_+$. Let $\lambda_0 \geq \max \big\lbrace 2,\sqrt{\frac{2}{3(d-1)}}2dC+\frac{2}{3(d-1)} \big\rbrace$, and define a sequence of matrices
     $\lbrace V_t \rbrace_{t=0}^\infty \subset \mathbb{R}^{d\times d}$ as
       \begin{align}\label{eq:vt_lemma}
         V_0 := \lambda_0\mathbb{I}_{d\times d}, \quad      V_{t+1} := V_t + \omega ( V_t ) \sum_{i=1}^{d-1}P_{t,i}, 
       \end{align}
       where 
       \begin{align}\label{eq:defP_a}
           P_{t,i} : = a^+_{t+1,i}(a^+_{t+1,i})^\mathsf{T}  +  a^-_{t+1,i} (a^-_{t+1,i})^\mathsf{T} ,\\
           a^\pm_{t+1,i} : = \frac{\tilde{a}^\pm_{t+1,i}}{\| \tilde{a}^\pm_{t+1,i}\|_2}, \quad \tilde{a}^\pm_{t+1,i} := c_t \pm \frac{1}{\sqrt{\lambda_{t,1}}} v_{t,i},
       \end{align}
       with $\lambda_{t,i} = \lambda_{i}(V_t)$ the eigenvalues of $V_t$ with corresponding normalized eigenvectors $v_{t,1},...,v_{t,d}\in\mathbb{S}^{d-1}$.
  Then we have
    \begin{align}\label{eq:eig_relation}
        \lambda_{\min}(V_t) \geq \sqrt{\frac{2}{3(d-1)}\lambda_{\max}(V_t)} \quad \text{for all}\quad t\geq 0.
    \end{align}
  
\end{theorem}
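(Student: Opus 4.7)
The plan is to proceed by induction on $t \geq 0$. For the base case, $V_0 = \lambda_0 \mathbb{I}_{d\times d}$ gives $\lambda_{\min}(V_0) = \lambda_{\max}(V_0) = \lambda_0$, so \eqref{eq:eig_relation} reduces to $\lambda_0 \geq \tfrac{2}{3(d-1)}$, which follows from $\lambda_0 \geq 2$. For the induction step I assume \eqref{eq:eig_relation} at time $t$, write $V_{t+1} = V_t + W_t$ with $W_t := \omega(V_t)\sum_{i=1}^{d-1} P_{t,i}$, and bound $\lambda_{\max}(V_{t+1})$ above and $\lambda_{\min}(V_{t+1})$ below separately.

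The upper bound is routine: each $P_{t,i}$ is the sum of two rank-one projectors onto unit vectors, so $\Tr(P_{t,i}) = 2$, and positivity yields $\lambda_{\max}\bigl(\sum_i P_{t,i}\bigr) \leq 2(d-1)$. Hence $\lambda_{\max}(V_{t+1}) \leq \lambda_{\max}(V_t) + 2(d-1)\omega(V_t)$.

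For the lower bound I would fix a unit vector $u$ and expand $u^\mathsf{T} P_{t,i} u = \sum_\pm (u^\mathsf{T} \tilde{a}^\pm_{t+1,i})^2 / \|\tilde{a}^\pm_{t+1,i}\|^2$. The odd cross-terms cancel under $\sum_\pm$ in the numerators, giving the identity $\sum_\pm (u^\mathsf{T} \tilde{a}^\pm_{t+1,i})^2 = 2(u^\mathsf{T} c_t)^2 + 2(u^\mathsf{T} v_{t,i})^2/\lambda_{t,1}$, while the denominators are uniformly bounded by $\|\tilde{a}^\pm_{t+1,i}\|^2 \leq (1 + 1/\sqrt{\lambda_{t,1}})^2$. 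Summing over $i=1,\dots,d-1$ and using $\sum_{i=1}^{d-1}(u^\mathsf{T} v_{t,i})^2 = 1 - (u^\mathsf{T} v_{t,d})^2$ (since $\{v_{t,j}\}$ is an orthonormal basis) yields
\begin{equation*}
u^\mathsf{T} W_t u \;\geq\; \frac{\omega(V_t)}{(1 + 1/\sqrt{\lambda_{t,1}})^2}\left(2(d-1)(u^\mathsf{T} c_t)^2 + \frac{2(1 - (u^\mathsf{T} v_{t,d})^2)}{\lambda_{t,1}}\right).
\end{equation*}
Combined with $u^\mathsf{T} V_t u \geq \lambda_{t,1}(1 - (u^\mathsf{T} v_{t,d})^2) + \lambda_{t,d}(u^\mathsf{T} v_{t,d})^2$ and minimizing over $u$ (the worst case sets $(u^\mathsf{T} c_t)^2 = 0$ and places $u$ along whichever eigendirection has the smallest post-update value), I get $\lambda_{\min}(V_{t+1}) \geq \lambda_{t,1} + \tfrac{2\omega(V_t)}{3\lambda_{t,1}}$, using $(1 + 1/\sqrt{\lambda_{t,1}})^2 \leq 3$, valid since $\lambda_{t,1} \geq \lambda_0 \geq 2$ propagates by monotonicity $V_{t+1} \succeq V_t$. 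Squaring and invoking the inductive hypothesis $\lambda_{t,1}^2 \geq \tfrac{2}{3(d-1)}\lambda_{t,d}$ together with the upper bound give
\begin{equation*}
\lambda_{\min}(V_{t+1})^2 \geq \lambda_{t,1}^2 + \tfrac{4}{3}\omega(V_t) \geq \tfrac{2}{3(d-1)}\bigl(\lambda_{t,d} + 2(d-1)\omega(V_t)\bigr) \geq \tfrac{2}{3(d-1)}\lambda_{\max}(V_{t+1}),
\end{equation*}
closing the induction.

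The main obstacle is making the lower bound on $\lambda_{\min}(V_{t+1})$ hold uniformly over all unit $u$. The update $W_t$ contributes nothing along $v_{t,d}$ except through the overlap with $c_t$, so one must split into cases depending on whether the new minimum eigenvector of $V_{t+1}$ lies closer to $v_{t,1}$ or to $v_{t,d}$; the latter case has to rely on the fact that $\lambda_{t,d}$ is already large enough (by the inductive hypothesis) to satisfy the target inequality on its own. The tightness of the final chain of inequalities, where the $\tfrac{4}{3}\omega(V_t)$ slack exactly matches $\tfrac{2}{3(d-1)}\cdot 2(d-1)\omega(V_t)$, leaves no room in the constant $\tfrac{2}{3(d-1)}$ and is what forces the extra technical condition $\lambda_0 \geq \sqrt{\tfrac{2}{3(d-1)}}\,2dC + \tfrac{2}{3(d-1)}$ on the initial scale to make the first iteration's numerical constants go through.
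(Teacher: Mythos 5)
The paper does not actually prove this statement: it is imported verbatim as Theorem~3 of the companion work~\cite{lumbreras2024linear}, and the text explicitly defers to that reference, so there is no in-paper argument to compare yours against. Judged on its own, your induction is essentially sound. The base case, the trace bound $\lambda_{\max}(V_{t+1}) \le \lambda_{\max}(V_t) + 2(d-1)\omega(V_t)$, the cancellation of cross terms giving $\sum_{\pm}(u^{\mathsf{T}}\tilde a^{\pm}_{t+1,i})^2 = 2(u^{\mathsf{T}}c_t)^2 + 2(u^{\mathsf{T}}v_{t,i})^2/\lambda_{t,1}$, the uniform denominator bound $(1+1/\sqrt{\lambda_{t,1}})^2 \le 3$ (valid because $\lambda_{t,1}\ge\lambda_0\ge 2$ propagates by monotonicity), and the closing chain $\lambda_{\min}(V_{t+1})^2 \ge \lambda_{t,1}^2 + \tfrac{4}{3}\omega(V_t) \ge \tfrac{2}{3(d-1)}\bigl(\lambda_{t,d}+2(d-1)\omega(V_t)\bigr) \ge \tfrac{2}{3(d-1)}\lambda_{\max}(V_{t+1})$ all check out.

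Two remarks on the part you leave as a sketch. First, since your lower bound on $u^{\mathsf{T}}V_{t+1}u$ is affine in $x=(u^{\mathsf{T}}v_{t,d})^2\in[0,1]$, what it actually yields is $\lambda_{\min}(V_{t+1}) \ge \min\bigl\{\lambda_{t,1}+\tfrac{2\omega(V_t)}{3\lambda_{t,1}},\ \lambda_{t,d}\bigr\}$, and since a lower bound on a minimum requires both branches, the second branch must be verified explicitly rather than mentioned in passing. Second, that branch is not closed by the inductive hypothesis, as you suggest: $\lambda_{t,1}^2\ge\tfrac{2}{3(d-1)}\lambda_{t,d}$ says nothing about $\lambda_{t,d}$ versus $\lambda_{\max}(V_{t+1})$. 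What is needed is $\lambda_{t,d}^2 \ge \tfrac{2}{3(d-1)}\lambda_{t,d} + \tfrac{4}{3}\omega(V_t)$ with $\omega(V_t)\le C\sqrt{\lambda_{t,d}}$, equivalently $\sqrt{\lambda_{t,d}}\,\bigl(\lambda_{t,d}-\tfrac{2}{3(d-1)}\bigr)\ge\tfrac{4}{3}C$, which follows from $\lambda_{t,d}\ge\lambda_0$, both assumed lower bounds on $\lambda_0$, and the monotonicity of the left-hand side in $\lambda_{t,d}$. So the hypothesis on $\lambda_0$ is doing real work at every step of that branch, not merely fixing ``the first iteration's numerical constants.'' With these two points filled in, your argument is a complete and correct proof of the cited theorem.
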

For the proof of the above Theorem we refer to the original reference. Then using this Theorem and the concentration bound for MOMLSE given in Corollary~\ref{cor:concentration_wmom} we can provide the following regret analysis for a stochastic linear bandit with vanishing variance noise.

\begin{theorem}\label{th:regret_scaling_variance}
Let $d\geq 2$, $k\in\mathbb{N}$ and $T=2(d-1)k\widetilde{T}$ for some $\widetilde{T}\in\mathbb{N}$, $\widetilde{T}\geq 2$. Let $\omega (X)$ defined as in~\eqref{eq:weight_choice} using $\lambda_0$ satisfying the constraints in Theorem~\ref{th:eigenvalue_scaling}. Then if we apply \textsf{LinUCB-VVN}~\ref{alg:linucb_vn_var}($\lambda_0,k,\omega$) to a $d$ dimensional stochastic linear bandit with variance as in~\eqref{eq:noise_condition_variance_vanishing} with probability at least $(1-\exp (-k/24))^{\widetilde{T}}$ the regret satisfies
      \begin{align}
      \textup{Regret}(T) \leq 4k(d-1)+144d(d-1)k\beta^2\log\left(\frac{T}{2(d-1)k} \right) \\ +24(d-1)^{\frac{3}{2}}k\beta\log\left(\frac{T}{2(d-1)k} \right),
    \end{align}
    and at each time step $t\in[T]$ with the same probability it can output an estimator $\hat{\theta}_t\in\mathbb{S}^{d-1}$ such that
    \begin{align}
        \|\theta - \hat{\theta}_t\|_2^2 \leq \frac{576d^2\beta^2 k+96d\sqrt{d-1}\beta k}{t},
    \end{align}
    with $\beta$ defined as in~\eqref{eq:beta_constant}.
\end{theorem}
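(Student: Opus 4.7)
The plan is to mirror the regret analysis of \textsf{LinUCB-VN} in~\cite{lumbreras2024linear}, replacing the subgaussian self-normalized confidence region by the median-of-means confidence region from Corollary~\ref{cor:concentration_wmom}, and exploiting the variance-overestimation built into the weight choice $\hat{\sigma}_t^2(a) = 1/\omega(V_{t-1})$. The argument has four parts: (i) verify the good event; (ii) bound the growth of $\lambda_{\min}(V_t)$ via Theorem~\ref{th:eigenvalue_scaling}; (iii) convert the confidence region to an $\ell_2$-accuracy bound on $\hat{\theta}_t$; (iv) sum per-action infidelities to bound the regret.

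First I would set the good event to be $G := \bigcap_{t\in[\widetilde T]} \{\|\theta - \widetilde\theta_t^{\textup{\tiny wMoM}}\|_{V_t}^2 \leq \beta\}$. By Corollary~\ref{cor:concentration_wmom}, $G$ holds with probability at least $(1-\exp(-k/24))^{\widetilde T}$ as soon as the variance-overestimation condition $1 - \langle\theta,a_{s,i}^\pm\rangle^2 \leq 1/\omega(V_{s-1})$ holds deterministically at every past batch. I would establish this by induction on $t$: on $G$, the inductive accuracy bound $\|\theta - \hat\theta_{t-1}\|_2^2 = O(\beta/\lambda_{\min}(V_{t-1}))$ together with the triangle inequality $\|\theta - a_{t,i}^\pm\|_2 \leq \|\theta - \hat\theta_{t-1}\|_2 + 1/\sqrt{\lambda_{\min}(V_{t-1})}$ yields $1 - \langle\theta,a_{t,i}^\pm\rangle^2 = O(1/\lambda_{\min}(V_{t-1}))$, which is in turn dominated by $1/\omega(V_{t-1}) = 12\sqrt{d-1}\beta/\sqrt{\lambda_{\max}(V_{t-1})}$ thanks to Theorem~\ref{th:eigenvalue_scaling}.

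Next, on $G$, Theorem~\ref{th:eigenvalue_scaling} directly gives $\lambda_{\min}(V_t) \geq \sqrt{2\lambda_{\max}(V_t)/(3(d-1))}$. Tracking $\lambda_{\max}(V_t)$ from below via $\Tr(V_t) = d\lambda_0 + 2(d-1)\sum_{s\leq t}\omega(V_{s-1})$ and the defining recursion $\omega(V_{s-1}) \propto \sqrt{\lambda_{\max}(V_{s-1})}$, a discrete Gr\"onwall-type argument gives $\lambda_{\max}(V_t) = \Omega\bigl(t^2/((d-1)\beta^2)\bigr)$ and hence $\lambda_{\min}(V_t) = \Omega\bigl(t/((d-1)\beta)\bigr)$. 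Combining with the confidence bound $\|\theta - \widetilde\theta_t^{\textup{\tiny wMoM}}\|_2^2 \leq \beta/\lambda_{\min}(V_t)$ and the elementary fact that unit-normalization of a nonzero vector at most doubles its distance to a unit target yields the stated $\|\theta - \hat\theta_t\|_2^2 = O(d^2\beta^2 k / t)$ after re-expressing the batch index $t$ in terms of the number of samples $2k(d-1)\cdot t$.

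For the regret, each action in batch $t$ contributes $1 - \langle\theta, a_{t,i}^\pm\rangle = \frac12\|\theta - a_{t,i}^\pm\|_2^2$. The same triangle inequality above gives $\|\theta - a_{t,i}^\pm\|_2^2 = O(\beta/\lambda_{\min}(V_{t-1}))$, and summing $2k(d-1)$ such contributions per batch for $t=1,\ldots,\widetilde T$ against $\sum_t 1/\lambda_{\min}(V_{t-1}) = O((d-1)\beta\log\widetilde T)$ produces the dominant $d(d-1)k\beta^2\log(T/(2(d-1)k))$ term; the subdominant $(d-1)^{3/2}k\beta\log(\cdot)$ and the additive $4k(d-1)$ come respectively from the triangle term $1/\sqrt{\lambda_{\min}(V_{t-1})}$ and from the first batch where $V_0=\lambda_0\mathbb{I}$ is used as-is.

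The main obstacle I expect is the circularity between the variance-overestimation event $G_t$ needed to invoke Corollary~\ref{cor:concentration_wmom} and the accuracy bound on $a_{s,i}^\pm$ needed to check $G_t$: the inductive step must quantitatively show that $12\sqrt{d-1}\beta/\sqrt{\lambda_{\max}(V_{s-1})}$ strictly dominates $O(1/\lambda_{\min}(V_{s-1}))$ for \emph{every} $s$, which is precisely why $\omega$ in~\eqref{eq:weight_choice} is chosen proportional to $\sqrt{\lambda_{\max}}$ rather than $\sqrt{\lambda_{\min}}$, and why the constraint on $\lambda_0$ in Theorem~\ref{th:eigenvalue_scaling} must be enforced. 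A secondary technical point is handling the first batch cleanly, since $\hat{\sigma}_1^2 := 1$ in Algorithm~\ref{alg:linucb_vn_var} breaks the generic recursion and is absorbed into the additive $4k(d-1)$ term.
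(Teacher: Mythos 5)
Your proposal is correct and follows essentially the same route as the paper's proof: the same good-event decomposition handled by induction/union bound to resolve the circularity with the variance-overestimation condition, the same use of Theorem~\ref{th:eigenvalue_scaling} together with a trace-based recursion and Gr\"onwall-type argument to get $\lambda_{\max}(V_t)=\Omega(t^2)$ and hence $\lambda_{\min}(V_t)=\Omega(t)$, and the same summation of per-action infidelities (with the first batch absorbed into the additive $4k(d-1)$ term). The only cosmetic difference is in how the two logarithmic terms are attributed in the bookkeeping, which does not affect the argument.
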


From the above Theorem we have that if we set $k = \lceil 24\log\left( \frac{\widetilde{T}}{\delta}\right) \rceil$ for some $\delta \in \left( 0 , 1 \right)$ then with probability at least $1 - \delta$ \textsf{LinUCB-VNN} achieves
\begin{align}
    \text{Regret}(T) = O\left( d^4\log^2 (T)\right), \quad  \|\theta - \hat{\theta}_t\|_2^2 = O\left( \frac{\log(T)}{t}\right).
\end{align}

\begin{proof}
From the expression of the regret~\eqref{eq:regret_variance_model} we have that to give an upper bound it suffices to gives an upper bound between the distance of the unknown parameter $\theta$ and the actions $a^\pm_{t,i}$ selected by the algorithm~\eqref{eq:action_general_update}. We denote the step $\tilde{t}\in [\widetilde{T} ]$ to run over the batches the algorithm updates the MoM estimator $\widetilde{\theta}_t^{\text{\tiny wMOM}}$. First we will do the computation assuming that the event
\begin{align}
    E_{\tilde{t}} := \lbrace \mathcal{H}_{\tilde{t}} : \forall s \in [\tilde{t}], \theta \in \mathcal{C}_s \rbrace , 
\end{align}
holds where $\mathcal{C}_s = \lbrace \theta' \in\mathbb{R}^d : \| \theta' - \widetilde{\theta}^{wMOM}_{\tilde{t}} \|^2_{V_s} \leq \beta \rbrace$. Here the history $\mathcal{H}_{\tilde{t}}$ is defined with the previous outcomes and actions of our algorithm i.e 
\begin{align}
    \mathcal{H}_{\tilde{t}} := \left(r^+_{s,i,j}, a^+_{s,i}, r^-_{s,i,j} , a^-_{s,i} \right)_{(s,i,j)\in[\tilde{t}]\times[d-1]\times [k]} 
\end{align}
Later we will quantify the probability that this event always hold. Using the definition of the actions~\eqref{eq:action_general_update}, $\theta,\widetilde{\theta}^{\text{\tiny wMOM}}_{\tilde{t}}\in\mathbb{S}^{d-1}$ and the arguments from~\cite{pmlr-v247-lumbreras24a}[Appendix C.1, Eq. (165)] we have that 
\begin{align}
    \|\theta - a^{\pm}_{\tilde{t},i} \|_2^2 \leq \frac{9\beta}{\lambda_{\min}(V_{\tilde{t}-1})}.
\end{align}
Then using that the design matrix $V_{\tilde{t}}$~\eqref{eq:design_update} is updated as in Theorem~\ref{th:eigenvalue_scaling} and the choice of weights~\eqref{eq:weight_choice} we fix 
\begin{align}\label{eq:lambda_condition}
    \lambda_0 \geq \max \left\lbrace 2, 2\sqrt{\frac{2}{3(d-1)}}\frac{d}{12\sqrt{d-1}\beta}+\frac{2}{3(d-1)} \right\rbrace
\end{align}
and we have that $\lambda_{\min}(V_{\tilde{t}})\geq \sqrt{\frac{2}{3(d-1)}\lambda_{\max}(V_{\tilde{t}})}$ applying Theorem~\ref{th:eigenvalue_scaling}. Inserting this into the above we have
\begin{align}\label{eq:dist_attheta}
   \|\theta - a^{\pm}_{\tilde{t},i} \|_2^2 \leq \frac{12\sqrt{d-1}\beta}{\sqrt{\lambda_{\max}(V_{\tilde{t}})}}.
\end{align}
Thus, it remains to provide a lower bound on $\lambda_{\max}(V_{\tilde{t}})$. We note that in~\cite{pmlr-v247-lumbreras24a}[Appendix C.1] they also had to provide an upper bound but this was because the constant $\beta$ beta they use depends on $t$. From the definition of $V_t$~\eqref{eq:design_update} we can bound the trace as
\begin{align}
    \mathrm{Tr}(V_{\tilde{t}}) &\geq \sum_{s=2}^{\tilde{t}} 2(d-1)\omega (V_{s-1} )\\
    &= \frac{\sqrt{d-1}}{6\beta}\sum_{s=1}^{\tilde{t}-1} \sqrt{\lambda_{\max}(V_s)}.
\end{align}
Then using the bound $\mathrm{Tr}(V_{\tilde{t}}) \geq \lambda_{\max}(V_{\tilde{t}})/d$ and some algebra we arrive at
\begin{align}
    \lambda_{\max}(V_{\tilde{t}}) \geq \frac{1}{1+6\frac{d}{\sqrt{d-1}}\beta}\sum_{s=1}^{\tilde{t}} \sqrt{\lambda_{\max}(V_s)}.
\end{align}
Now we have an inequality with the function $\lambda_{\max}(V_s)$ at both sides. In order to solve it we use the technique from~\cite{pmlr-v247-lumbreras24a}[Appendix C.1, Eqs.~(197)--(208)] which consist on extending $\lambda_{\max}(V_{\tilde{t}})$ to the continuous with a linear interpolation and then transforming the sum to an integral which leads to a differential inequality. Solving this leads to 
\begin{align}
    \lambda_{\max}(V_{\tilde{t}}) \geq \frac{\tilde{t}^2}{4(1+6\frac{d}{\sqrt{d-1}}\beta)^2}.
\end{align}
Now we can insert the above into~\eqref{eq:dist_attheta} and we have
\begin{align}\label{eq:theta_at_distance}
     \|\theta - a^{\pm}_{\tilde{t},i} \|_2^2 &\leq \frac{24\sqrt{d-1}\beta(1+6\frac{d}{\sqrt{d-1}}\beta)}{\tilde{t}-1} \\
     &= \frac{144d\beta^2+24\sqrt{d-1}\beta}{\tilde{t}-1}.
\end{align}
Thus, we can inserted the above bound into the regret expression~\eqref{eq:regret_variance_model} and we have
\begin{align}
    &\text{Regret}(T) = \frac{1}{2}\sum_{t=1}^{T} \|\theta - a_t \|_2^2   \\
    &= \frac{1}{2}\sum_{\tilde{t}=1}^{\tilde{T}}\sum_{i=1}^{d-1}\sum_{j=1}^k \left( \|\theta - a^{+}_{\tilde{t},i} \|_2^2 + \|\theta - a^{-}_{\tilde{t},i} \|_2^2 \right) \\
    &\leq  4k(d-1) + \frac{1}{2}\sum_{\tilde{t}=2}^{\tilde{T}}\sum_{i=1}^{d-1}\sum_{j=1}^k \left( \|\theta - a^{+}_{\tilde{t},i} \|_2^2 + \|\theta - a^{-}_{\tilde{t},i} \|_2^2 \right)  \\
    &\leq 4k(d-1) + (144d(d-1)k\beta^2+24(d-1)^{\frac{3}{2}}k\beta) \sum_{\tilde{t}=2}^{\tilde{T}} \frac{1}{t-1}\\
    &\leq 4k(d-1)+144d(d-1)k\beta^2\log \widetilde{T} +24(d-1)^{\frac{3}{2}}k\beta\log\widetilde{T}  \\
    &=  4k(d-1)+144d(d-1)k\beta^2\log\left(\frac{T}{2(d-1)k} \right) \nonumber \\ &\quad +24(d-1)^{\frac{3}{2}}k\beta\log\left(\frac{T}{2(d-1)k} \right).
\end{align}
It remains to quantify the probability that the event $E_{\tilde{t}}$ holds. For that we will use the concentration bounds of the median of means for least squares estimator stated in Corollary~\ref{cor:concentration_wmom}. From the variance condition of our model~\eqref{eq:noise_condition_variance_vanishing} we have
\begin{align}
    \VX [\epsilon^\pm_{\tilde{t},i,j} | \mathcal{F}_{\tilde{t}-1}] \leq 1 - \langle \theta , a^\pm_{\tilde{t},i} \rangle^2 \leq 2(1-\langle \theta , a^\pm_{\tilde{t},i} )) = \| \theta - a^\pm_{\tilde{t},i} \|_2^2,
\end{align}
where we used $1+\langle \theta , a^\pm_{\tilde{t},i} \rangle \leq 2$.
Thus from our choice of weights~\eqref{eq:weight_choice} and~\eqref{eq:theta_at_distance} we have that
\begin{align}
    \text{if } \theta\in\mathcal{C}_{s-1} \Rightarrow \VX [\epsilon^\pm_{\tilde{t},i,j} | \mathcal{F}_{\tilde{t}-1}] \leq \hat{\sigma}_s^2 (a^\pm_{s,i}).
\end{align}
Then in order to apply Corollary~\ref{cor:concentration_wmom} we note that from the choice $\hat{\sigma}_s^2 (a^\pm_{1,i}) = 1$ the event $G_{\tilde{t}}$ at $\tilde{t}=1$ is always satisfied i.e $\mathrm{Pr}(G_1) = 1$. Then applying Bayes theorem, union bound over the events $G_1,E_1,...,G_{t-1},E_t$ and Corollary~\ref{cor:concentration_wmom} we have
\begin{align}
    \mathrm{Pr}(E_{\widetilde{T}} \cap G_{\widetilde{T}}) \geq \left(1 - \exp (-k/24) \right)^{\widetilde{T}} .
\end{align}   
This probability also quantifies the probability that~\eqref{eq:theta_at_distance} holds since the only assumption we used is $\theta\in\mathcal{C}_{\tilde{t}-1}$. Then we can take simply one of the actions $a^\pm_{\tilde{t},i}$ as the estimator $\hat{\theta}_t$ and the result follows using the relabeling $t = 2(d-1)k\tilde{t}$ and the inequality $1/(\tilde{t} - 1) \leq 2/\tilde{t}$ for $\tilde{t}\geq 2$.  A more detailed analogous computation of the above probability can be found in~\cite{pmlr-v247-lumbreras24a}[Appendix C.1].
\end{proof}

In the previous Theorem we did not set a specific value for the parameter $k$ or the number of subsamples per action. We note that the regret scales linearly with $k$ but since the success probability scales exponentially with $k$ it will suffice to set $k\sim \log (T)$ such that in expectation we get the $\log^2 (T)$ behavior. We formalize this in the following Corollary.

\begin{corollary}\label{cor:expected_regret}
Under the same assumptions of Theorem~\ref{th:regret_scaling_variance} we can fix $k = \lceil 24\log(\widetilde{T}^2) \rceil$ and we have that for $t\in [T ]$,
\begin{align}
    \EX \left[ \textup{Regret}(T) \right] \leq 344(d-1)\log\left(T \right)+ 
    \left( 3546d(d-1)\beta^2 +1152(d-1)^{\frac{3}{2}}\beta \right)\log^2\left(T \right)
\end{align}

    \begin{align}
    \EX \left[ \|\theta - \hat{\theta}_t \|_2^2 \right] \leq  \frac{27648d^2\beta^2 \log ( T)+4608d\sqrt{d-1}\beta \log ( T)}{t} + \frac{4(d-1)\log (T )}{T}.
\end{align}
Using that $\beta = O (d)$ gives
\begin{align}
    \EX \left[ \textup{Regret}(T) \right] = O(d^4\log^2 (T) ), \quad  \EX \left[ \|\theta - \hat{\theta}_t \|_2^2 \right] = \tilde{O}\left( \frac{d^4}{t} \right).
\end{align}
\end{corollary}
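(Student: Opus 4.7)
The plan is to convert the high-probability bound in Theorem~\ref{th:regret_scaling_variance} into a bound on the expectation by choosing $k$ just large enough that the overall failure event contributes only a logarithmic additive overhead. First I would fix $k = \lceil 24\log(\widetilde{T}^2)\rceil$, so that $\exp(-k/24) \leq \widetilde{T}^{-2}$. Applying Bernoulli's inequality then gives the success probability $(1-\exp(-k/24))^{\widetilde{T}} \geq 1 - \widetilde{T}\cdot \widetilde{T}^{-2} = 1 - 1/\widetilde{T}$, i.e. the ``good event'' on which the bounds of Theorem~\ref{th:regret_scaling_variance} hold.

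Next I would split the expectation over this good event and its complement. On the good event, the regret is bounded by the explicit high-probability expression from Theorem~\ref{th:regret_scaling_variance}, which contributes a $\log(T/(2(d-1)k)) \leq \log T$ factor. On the bad event, I would use the trivial bound $\textup{Regret}(T) \leq T$ (since each infidelity is at most $1$), whose contribution to the expectation is at most $T/\widetilde{T} = 2(d-1)k$. Substituting $k \leq 24\log(\widetilde{T}^2) + 1 \leq 48\log(T) + 1$ throughout and absorbing the additive $+1$ into the leading constants yields the stated $\log^2(T)$ expression, and using $\beta = O(d)$ recovers the asymptotic $O(d^4 \log^2 T)$ claim.

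For the estimator bound I would apply the same good-event/bad-event split with the fidelity inequality from Theorem~\ref{th:regret_scaling_variance}. On the good event, $\|\theta - \hat{\theta}_t\|_2^2 \leq (576 d^2\beta^2 k + 96 d\sqrt{d-1}\beta k)/t$. On the bad event (probability at most $1/\widetilde{T}$), I would use the fact that both $\theta$ and $\hat{\theta}_t$ lie on $\mathbb{S}^{d-1}$, so $\|\theta - \hat{\theta}_t\|_2^2 \leq 4$; this contributes at most $4/\widetilde{T} = 8(d-1)k/T$, which after substituting $k = O(\log T)$ gives the $O(d\log(T)/T)$ additive term appearing in the corollary.

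The argument is essentially a standard probability-to-expectation boost, so I do not expect a conceptual obstacle; the only real work is the bookkeeping of constants and the observation that replacing $\log(T/(2(d-1)k))$ by $\log T$ and absorbing low-order terms delivers exactly the constants $344$, $3546$, $1152$, $27648$, $4608$ appearing in the statement. The only mild subtlety is ensuring that $k$ is integer-valued and that $\widetilde{T} \geq 2$ (from the assumption of Theorem~\ref{th:regret_scaling_variance}) so that Bernoulli's inequality applies with the desired constant; both are immediate from the hypotheses.
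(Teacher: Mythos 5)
Your proposal follows the paper's proof essentially verbatim: the same choice of $k$, the same Bernoulli-inequality lower bound $(1-\widetilde{T}^{-2})^{\widetilde{T}}\ge 1-1/\widetilde{T}$ on the success probability, the same good/bad-event split with trivial bounds on the complement, and the same final substitution $k=O(\log T)$, $\widetilde{T}\le T$. The only slip is that in the classical vanishing-variance model the per-round regret $1-\langle\theta,a_t\rangle$ can be as large as $2$ (the paper accordingly uses $\textup{Regret}(T)\le 2T=4(d-1)k\widetilde{T}$ rather than your $\textup{Regret}(T)\le T$), but this affects only a factor of two in the additive $\log T$ term and not the structure or the asymptotics of the argument.
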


\begin{proof}
The result of Theorem~\ref{th:regret_scaling_variance}  holds with probability at least $(1-\exp (-k/24))^{\widetilde{T}}$. Setting $k = \lceil 24\log(\widetilde{T}^2) \rceil $ gives 
    \begin{align}
        (1-\exp (-k/24))^{\widetilde{T}} \geq \left( 1 - \frac{1}{\widetilde{T}^2} \right)^{\widetilde{T}} \geq 1 - \frac{1}{\widetilde{T}}.
    \end{align}
Then given the event $R_T$ such that Algorithm~\ref{alg:linucb_vn_var} achieves the bounds given by Theorem~\ref{th:regret_scaling_variance} we have that the probability of failure is bounded by 
\begin{align}
    \mathrm{Pr}(R^C_T)  \leq \frac{1}{\widetilde{T}},
\end{align}
where we used $1 = \mathrm{Pr}(R_T)  + \mathrm{Pr}(R^C_T)  $. Then the expectation of the bad events can be bounded as
\begin{align}
    \EX \left[\text{Regret}(T)\mathbb{I}\lbrace R^C_T \rbrace \right] &\leq  4(d-1)k\widetilde{T}\mathrm{Pr}(R^C_T) \leq  4(d-1)k \\
    \EX \left[ \| \theta -\hat{\theta}_t \|_2^2 \mathbb{I}\lbrace R^C_T \rbrace \right] &\leq 4\mathrm{Pr}(R^C_T) \leq \frac{4}{\widetilde{T}} 
\end{align}
where we used $\text{Regret}(T) \leq 2T = 4(d-1)k\widetilde{T}$, $\| \theta - \hat{\theta}_t\|_2^2 \leq 4$. Finally the result follows inserting the value of $k = 24\log (\widetilde{T}^2) $ into the bounds of Theorem~\ref{th:regret_scaling_variance} and using $\widetilde{T} \leq T$.
\end{proof}

\section{Algorithm for qubit PSMAQB and numerical experiments}\label{sec:algorithm_psmaqb}

In this Section we prove our main result that is a regret bound for \textsf{LinUCB-VVN} when applied to the qubit PSMAQB problem.

\begin{figure}[h!]
    \centering
    \begin{overpic}[percent,width=0.6\textwidth]{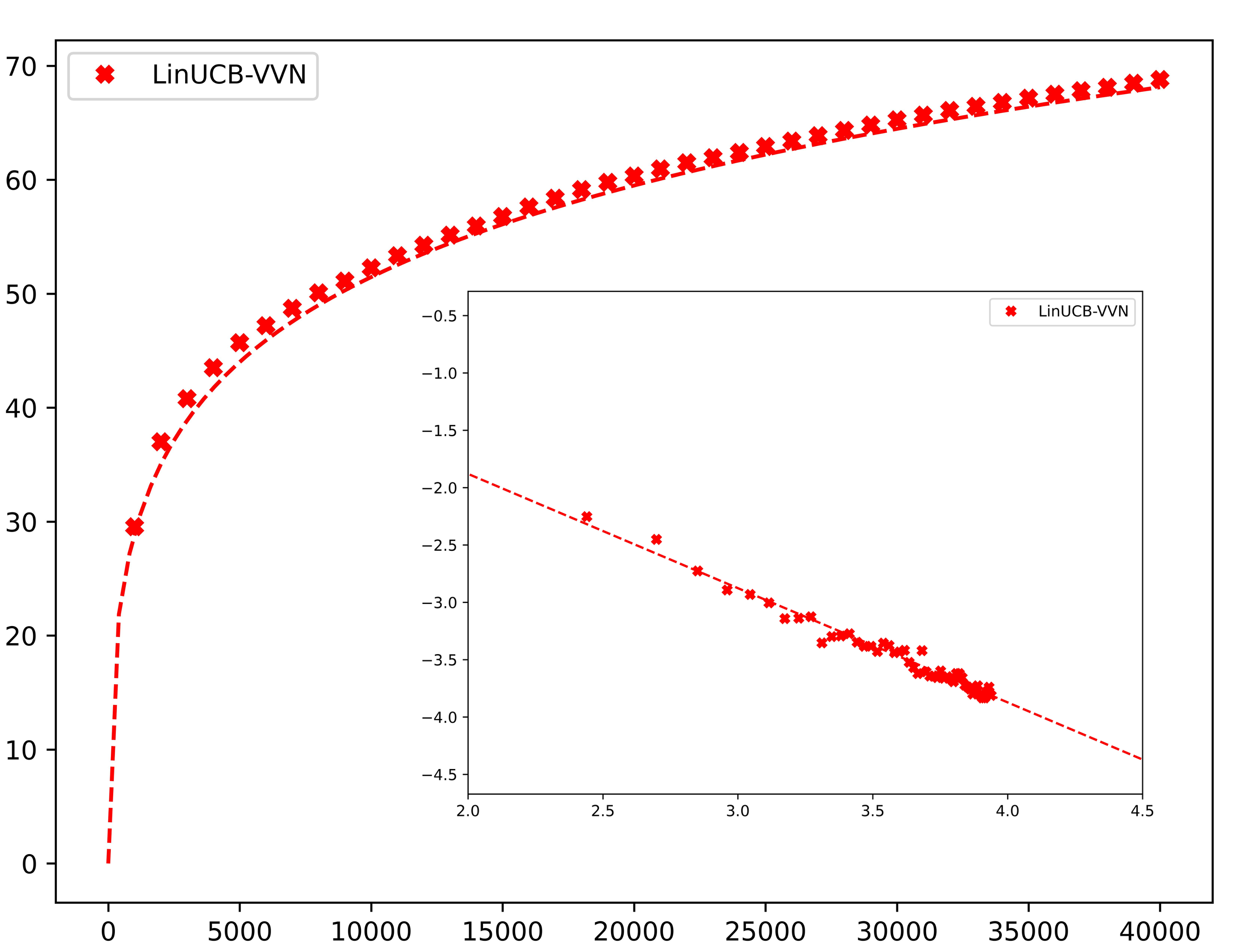}
    \put(-6,35){\rotatebox{90}{Regret($T$)}}
    \put(50,-2){$T$}
    \put(32,25){\rotatebox{90}{\tiny$\log\left( 1-F(\Pi,\Pi_t) \right)$}}
    \put(60,8){\tiny $\log\left( \frac{t}{\log (t)} \right)$}
    \end{overpic}
    \caption{Expected regret vs the number or rounds $T$ for the \textsf{LinUCB-VNN} algorithm. We run $T = 4\cdot 10^4$ rounds with $k = 10$ subsamples for the median of means construction. We use $100$ independents experiments and average over them. We obtain results for each round but only plot (red crosses) few for clarity of the figure. We fit the regression $\text{Regret}(T) = m_1\log^2 T + b_1$ with $m_1 = 3.2164 \pm 0.0009$ and $b_1 = 0.84 \pm 0.016$. In the inset plot we plot the expected infidelity  of the output estimator at each rounds $t\in [T]$ versus the number of rounds $t$. We take $\Pi_t = \Pi_{\theta^{\text{wMoM}}_t} $ as the estimator given by the median of means linear least squares estimator. We fit the regression $1- F(\Pi,\Pi_t) = b_2 \left(\frac{\log t}{t}\right)^{m_2} $ and we obtain $m_2 = -0.996 \pm 0.002$ $b_2 = 0.112\pm 0.007$. We note that the number of subsamples of the theoretical results is very conservative in comparison with the value we take for the simulations.}
    \label{fig:regret_PSMAQB}
\end{figure}

\begin{theorem}\label{th:regret_PSMAQB}
    Let $\widetilde{T}\in\mathbb{N}$ and fix $T = \lceil 96\widetilde{T}\log (\widetilde{T}^2) \rceil $. Then given a \textup{PSMAQB} with action set $\mathcal{A} = \mathcal{S}^*_2$ and environment $\Pi_{\theta}\in\mathcal{S}^*_2$ (qubits) we can apply Algorithm~\ref{alg:linucb_vn_var} for $d=3$ and it achieves
    \begin{align}
        \EX\left[\textup{Regret}(T) \right] \leq  C_1\log\left(T \right)+ 
    C_2\log^2\left(T\right).
    \end{align}
    for some universal constants $C_1, C_2 \geq 0$. Also at each time step $t\in [T ]$ it outputs an estimator  $\hat{\Pi}_{t}\in\mathcal{S}^*_2$ of $\Pi_\theta$ with infidelity scaling 
    \begin{align}
         \EX \left[ 1 -  F\left(\Pi_\theta , \hat{\Pi}_{t} \right) \right] \leq   \frac{C_3\log(T)}{t} ,
    \end{align}
    for some universal constant $C_3 \geq 0$.
\end{theorem}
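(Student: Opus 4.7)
The plan is to recognise the qubit PSMAQB as a special case of the linear stochastic bandit with linearly vanishing variance noise (Section~\ref{sec:linear_bandit_vanishing_variance}) and then invoke Corollary~\ref{cor:expected_regret} essentially off the shelf, carrying the bounds back through the quantum/classical dictionary of Section~\ref{sec:classical_models}.

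First I would cast the problem. Using the Bloch parametrization~\eqref{eq:parametrization}, a qubit environment $\Pi_\theta$ and an action $\Pi_{a_t}$ are identified with $\theta, a_t \in \mathbb{S}^{2}\subset \mathbb{R}^{3}$; the action set $\mathcal{A}_2^{\textup{quantum}}$ is exactly $\mathbb{S}^{2}$, so the classical dimension to plug into Algorithm~\ref{alg:linucb_vn_var} and its analysis is $d=3$. The renormalized reward $\tilde{r}_t=2r_t-1\in\{-1,1\}$ has conditional mean $\langle \theta,a_t\rangle$ by~\eqref{eq:identity_trace}, and specialising~\eqref{eq:variance_linear_classicalquantum} to $d=2$ gives $\VX[\epsilon_t\mid \mathcal{F}_{t-1}]=1-\langle\theta,a_t\rangle^{2}$, so the variance-vanishing condition~\eqref{eq:noise_condition_variance_vanishing} holds with equality. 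Therefore \textsf{LinUCB-VVN} with $d=3$, the weights and variance estimators~\eqref{eq:weight_choice}, and $\lambda_0$ chosen to satisfy~\eqref{eq:lambda_condition} can be applied directly to the qubit PSMAQB.

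Next I would apply Corollary~\ref{cor:expected_regret} with $d=3$, $k=\lceil 24\log(\widetilde T^{2})\rceil$ and $T=2(d-1)k\widetilde T=\lceil 96\widetilde T\log(\widetilde T^{2})\rceil$. This yields
\begin{align}
\EX\!\left[\textup{Regret}_{cl}(T)\right] &\leq c_1 \log T + c_2 \log^{2} T, \\
\EX\!\left[\|\theta-\hat{\theta}_t\|_2^{2}\right] &\leq \frac{c_3 \log T}{t} + \frac{c_4 \log T}{T},
\end{align}
for absolute constants $c_1,\dots,c_4$ (obtained by fixing $d=3$ and $\beta = O(1)$ from~\eqref{eq:beta_constant}, since $\|\theta\|_2=1$). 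Converting to the quantum regret via the dictionary~\eqref{eq:regret_relation_classicalquantum}, which for $d=2$ gives $\textup{Regret}(T,\pi,\Pi_\theta)=\tfrac12 \textup{Regret}_{cl}(T,\pi,\theta)$, immediately delivers the $O(\log^{2} T)$ bound with constants $C_1=c_1/2$, $C_2=c_2/2$.

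Finally, for the infidelity statement, I would take $\hat{\Pi}_t=\Pi_{\hat{\theta}_t}$ where $\hat{\theta}_t\in\mathbb{S}^{2}$ is the estimator produced by the algorithm (e.g.\ one of the batch actions $a^{\pm}_{\tilde t, i}$, as in the proof of Theorem~\ref{th:regret_scaling_variance}). The qubit identity~\eqref{eq:identity_trace} with $d=2$ gives
\begin{align}
1-F(\Pi_\theta,\hat{\Pi}_t) \;=\; \frac{1-\langle \theta,\hat{\theta}_t\rangle}{2} \;=\; \tfrac14\|\theta-\hat{\theta}_t\|_2^{2},
\end{align}
so taking expectations and using $1/T\le 1/t$ for $t\in[T]$ converts the classical $\ell_2$ bound above into
\begin{align}
\EX\!\left[1-F(\Pi_\theta,\hat{\Pi}_t)\right] \;\le\; \frac{C_3\log T}{t},
\end{align}
with $C_3=(c_3+c_4)/4$, as claimed.

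The only mildly delicate points, rather than real obstacles, are (i) verifying that Algorithm~\ref{alg:linucb_vn_var} run on the classical reformulation is the same procedure one runs directly on the quantum bandit, which follows because the Bloch-vector reparametrisation is a bijection between pure states and $\mathbb{S}^{2}$ and because the reward reparametrisation $\tilde r_t = 2r_t-1$ is deterministic; and (ii) checking that the choice $\lambda_0$ admitted by Theorem~\ref{th:eigenvalue_scaling} in the instantiation~\eqref{eq:lambda_condition} is compatible with $d=3$ and the $\beta$ of~\eqref{eq:beta_constant}. Both are direct bookkeeping with the framework already developed.
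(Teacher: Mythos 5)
Your proposal is correct and follows essentially the same route as the paper's proof: identify the qubit PSMAQB with the $d=3$ vanishing-variance linear bandit via the Bloch parametrization, verify the $\lambda_0$ condition~\eqref{eq:lambda_condition} and compute $\beta$ from~\eqref{eq:beta_constant} with $\|\theta\|_2=1$, invoke Corollary~\ref{cor:expected_regret}, and translate back through~\eqref{eq:regret_relation_classicalquantum} and the identity $\|\theta-\hat{\theta}_t\|_2^2=4\bigl(1-F(\Pi_\theta,\Pi_{\hat{\theta}_t})\bigr)$. The paper additionally fixes the concrete value $\lambda_0=2$ and evaluates $\beta=9(3\sqrt{3}+2)^2$ explicitly, but this is only bookkeeping already anticipated in your point (ii).
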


\begin{proof}
    In order to apply Algorithm~\ref{alg:linucb_vn_var} to a PSMAQB we set $d=3$ (dimension for a classical linear stochastic bandit) and the actions that we select will be given by $\Pi_{a^\pm_{t,i}}$ where $a^\pm_{t,i}$ are updated as in~\eqref{eq:action_general_update}. Note that they are valid action since $ a^{\pm}_{t,i} \in\mathbb{S}^2$ imply $\Pi_{a^\pm_{t,i}} \in\mathcal{S}^*_2$. The rewards received by the algorithm follow~\eqref{eq:variance_linear_classicalquantum} with the normalization given in~\eqref{eq:classical_renormalizedreward}. This model fits into the linear bandit with linearly vanishing variance noise model explained in Section~\ref{sec:linear_bandit_vanishing_variance} and thus we can apply the guarantees established in Theorem~\ref{th:regret_scaling_variance} and Corollary~\ref{cor:expected_regret}.
    
    The algorithm is set with $k= \lceil 24\log (\widetilde{T}^2) \rceil $ batches for the MoM construction. We set $\lambda_0 = 2$, and using $\| \theta \|_2 =1 $
    we have that the constant $\beta$ given in~\eqref{eq:beta_constant} has the value
    \begin{align}
        \beta = 9\left(3\sqrt{3}+2 \right)^2 = 279+108\sqrt{3}.
    \end{align}
    Then we can check that for $d=3$ the condition~\eqref{eq:lambda_condition} for the input parameter $\lambda_0$ for Theorem~\ref{th:regret_scaling_variance} to hold is satisfied since
    \begin{align}
        \lambda_0 = 2 \geq \max\left\lbrace 2 , \frac{1}{3}+\frac{1}{2\sqrt{6}(279+108\sqrt{3})} \right\rbrace = 2 .
    \end{align}
    In the above we just substituted all numerical values. Then we are under the assumptions of Theorem~\ref{th:regret_scaling_variance} and Corollary~\ref{cor:expected_regret} and the result follows applying both results with the relation of regrets between the classical and quantum model given in~\eqref{eq:regret_relation_classicalquantum}, the relation
    \begin{align}
        \| \theta - \hat{\theta}_t \|_2^2 = 4\left(1-F\left(\Pi_\theta , \Pi_{ \hat{\theta}_t} \right) \right),
    \end{align}
    and substituting all numerical values. We take the estimator $\hat{\theta}_t$ given in Theorem~\ref{th:regret_scaling_variance} for $d=3$. We use also the bound $\widetilde{T}\leq T$ and reabsorb all the constants into $C_1,C_2,C_3$.
\end{proof}

\textbf{Remark 1.} The constant dependence can be slightly improved taking the estimator for $\Pi_\theta$ as $\Pi_{\theta^{\text{wMoM}}_t}$ with $\theta^{\text{wMoM}}_t$ defined in~\eqref{eq:action_general_update}.

\textbf{Remark 2.} The result of Theorem~\ref{th:regret_PSMAQB} also holds with high probability. In particular for the choice of batches $k = 24\log(\widetilde{T}^2)$ with probability at least $1-\frac{1}{\widetilde{T}}$.

\section{Regret lower bound for PSMAQB}\label{sec:lower_bound}

While the algorithm for PSMAQB presented above is inspired by classical bandit theory, the lower bound on the regret that we derive is essentially based on quantum information theory. The key insight here is that a policy for PSMAQB can be viewed as a sequence of state tomographies. The expected fidelity of these tomographies is linked to the regret. Hence, existing upper bounds on tomography fidelity also provide a lower bound for the expected regret of the policy. 

\subsection{Average fidelity bound for pure state tomography}
In its most general form, a tomography procedure takes $n$ copies of an unknown state $\Pi\in \mathcal{S}_d^*$ and performs a joint measurement on the state $\Pi^{\otimes n}$. This is captured in the following definition. Let $(\mathcal{S}_d^*, \Sigma)$ be a $\sigma$-algebra. A \emph{tomography scheme} is a positive operator-valued measure (POVM) $\mathcal{T}:\Sigma\to \End(\hil^{\otimes n})$ such that $\mathcal{T}(\mathcal{S}_d^*)=\Pi^+_n$, where $\Pi^+_n$ is the symmetrization operator on $\hil^{\otimes n}$. For any $\rho\in \End(\hil^{\otimes n})$, this POVM gives rise to a complex-valued measure
\begin{equation}
    P_{\mathcal{T},\rho}(A)=\Tr(\mathcal{T}(A)\rho)
\end{equation}
for $A\in\Sigma$. $P_{\mathcal{T},\rho}$ becomes a probability measure if $\rho$ satisfies $\rho\ge 0,\ \Pi^+_n\rho=\rho\Pi^+_n=\rho$, and $\Tr \rho=1$. Given $n$ copies of $\Pi$, the tomography scheme produces the distribution $P_{\mathcal{T},\Pi^{\otimes n}}$ of the predicted states. Note that $\Pi^{\otimes n}$ satisfies the properties above, so $P_{\mathcal{T},\Pi^{\otimes n}}$ is indeed a probability distribution. The fidelity of this distribution is given by
\begin{equation}
    F(\mathcal{T}, \Pi)=\int \Tr(\Pi \sigma) dP_{\mathcal{T},\Pi^{\otimes n}}(\sigma). \label{eq:avg_fidelity}
\end{equation}
Finally, the average fidelity of the tomography scheme is defined as
\begin{equation}
    F(\mathcal{T})=\int F(\mathcal{T}, |\psi\>\<\psi|)d\psi,
\end{equation}
where the integration is taken with respect to the normalized uniform measure over all pure states. In the following, $\int d\psi$ will always imply this measure. We will provide a lower bound on $F(\mathcal{T})$ in terms of $d$ and $n$, following the proof technique from~\cite{hayashi2005reexamination}. In~\cite{hayashi2005reexamination}, the proof is only presented for tomography schemes producing a finite number of predictions. For our definition, we will require more general measure-theoretic tools. Before we introduce the upper bound on the fidelity, we will prove some auxiliary lemmas about the nature of the measure $P_{\mathcal{T},\rho}$.

\begin{lemma}
  \label{lem:radon}
  Let $(\Omega, \Sigma)$ be a  $\sigma$-algebra, and let $O: \Sigma\to \End(\widetilde{\hil})$ be a POVM with values acting on a finite-dimensional Hilbert space $\widetilde{\hil}$ with $\operatorname{dim} \thil=\tdim$ s.t. $O(\Omega)\le\mathbbm{1}$, where $\mathbbm{1}$ is the identity operator. Further, let $P_{O,\sigma}: \Sigma\to\mathbb{C}$ be a complex-valued measure, defined for any $\sigma\in \End(\widetilde{\hil})$ as 
\begin{equation}
  P_{O,\sigma}(A)=\Tr[O(A)\sigma].
  \label{pdef_alt}
\end{equation}
Then, there exists a set of functions $\{f_\sigma\}$ indexed by $\sigma\in \End \widetilde{\mathcal{H}}$ that are linear w.r.t. $\sigma$ for all $\omega$ and that satisfy
\begin{equation}
  f_\sigma: \Omega\to \mathbb{C}\quad\text{s.t.}\quad \forall A\in\Sigma\ \ P_{O,\sigma}(A)=\int_A f_\sigma(\omega) dP_{O,\id}(\omega). \label{eq:radon_derivatives}
\end{equation}

\end{lemma}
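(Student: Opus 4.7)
The plan is to obtain the functions $f_\sigma$ as Radon--Nikodym derivatives of $P_{O,\sigma}$ with respect to the reference measure $P_{O,\id}$, and then secure the linearity in $\sigma$ by first fixing a basis of $\End(\widetilde{\hil})$ and extending by linearity afterward. Observe that $P_{O,\id}$ is a finite positive Borel measure on $(\Omega,\Sigma)$, since $O(A)\ge 0$ for every $A\in\Sigma$ and $P_{O,\id}(\Omega)=\Tr O(\Omega)\le \tdim$.

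The first key step is to prove absolute continuity $P_{O,\sigma}\ll P_{O,\id}$ for every $\sigma\in \End(\widetilde{\hil})$. If $A\in\Sigma$ satisfies $P_{O,\id}(A)=\Tr O(A)=0$, then since $O(A)$ is positive semidefinite, the vanishing of its trace forces $O(A)=0$, and hence $P_{O,\sigma}(A)=\Tr(O(A)\sigma)=0$. This works uniformly in $\sigma$, which is exactly what we need.

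Next I fix a basis $\{E_i\}_{i=1}^{\tdim^2}$ of $\End(\widetilde{\hil})$ and denote by $c_i: \End(\widetilde{\hil})\to\mathbb{C}$ the corresponding dual coordinate functionals, so that any $\sigma$ admits a unique expansion $\sigma=\sum_i c_i(\sigma) E_i$ with each $c_i$ linear in $\sigma$. For each basis element $E_i$, the complex measure $P_{O,E_i}$ can be decomposed into four finite signed measures by splitting into Hermitian/anti-Hermitian and then into positive/negative parts; applying the classical Radon--Nikodym theorem to each piece and recombining yields an integrable function $f_{E_i}:\Omega\to\mathbb{C}$ with $P_{O,E_i}(A)=\int_A f_{E_i}(\omega)\,dP_{O,\id}(\omega)$ for all $A\in\Sigma$. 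I then simply \emph{define}
\begin{equation}
f_\sigma(\omega):=\sum_{i=1}^{\tdim^2} c_i(\sigma)\, f_{E_i}(\omega).
\end{equation}
Linearity of $\sigma\mapsto f_\sigma(\omega)$ at each fixed $\omega$ follows by construction from the linearity of the coordinates $c_i$. Linearity of the integral and of $\sigma\mapsto P_{O,\sigma}$ then gives, for every $A\in\Sigma$,
\begin{equation}
\int_A f_\sigma\,dP_{O,\id}=\sum_i c_i(\sigma)\int_A f_{E_i}\,dP_{O,\id}=\sum_i c_i(\sigma) P_{O,E_i}(A)=P_{O,\sigma}(A),
\end{equation}
which is the required identity.

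The only slightly delicate point is that Radon--Nikodym derivatives are only unique up to $P_{O,\id}$-null sets, so there is no canonical pointwise-linear family of representatives. The construction above sidesteps this issue: by picking an arbitrary representative \emph{once} for each basis vector $E_i$ and then defining $f_\sigma$ through the fixed coordinates $c_i(\sigma)$, linearity in $\sigma$ holds identically in $\omega$, not merely almost everywhere. I do not anticipate any additional obstacle; the proof is essentially a measure-theoretic bookkeeping exercise.
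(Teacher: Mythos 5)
Your proposal is correct and follows essentially the same route as the paper: both establish $P_{O,\sigma}\ll P_{O,\id}$ by noting that a positive semidefinite operator with zero trace vanishes, then obtain Radon--Nikodym derivatives for a basis of $\End(\widetilde{\hil})$ (the paper uses $\{|i\>\<j|\}$ with coordinates $\<i|\sigma|j\>$, you use a general basis with dual functionals) and extend to all $\sigma$ by linearity. Your explicit remarks on decomposing the complex measures and on fixing representatives once to get pointwise linearity are just slightly more detailed bookkeeping of the same argument.
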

We purposefully formulated this lemma with slightly more general objects than the ones used in the definition of tomography. That is, $\Omega$ does not need to be $\mathcal{S}_d^*$, and $\widetilde{\hil}$ does not need to be the n-th power $\hil^{\otimes n}$, although we will focus on this case.

\begin{proof}
Let $\{\ket{i}\}_{i=1}^{\tdim}$ be a basis of $\widetilde{\mathcal{H}}$ We will first show that $P_{O,\sigma}$ is dominated by $P_{O,\id}$ for all $\sigma$. Indeed, let $A\in \Sigma$. Assume that $P_{O,\mathbbm{1}}(A)=0$. This gives us 
\begin{equation}
  \Tr [O(A)\mathbbm{1}]=\Tr [O(A)]=0,
\end{equation} 
and, because $O(A) \ge 0$, we also have $O(A)=0$.
Therefore,
\begin{equation}
  P_{O,\sigma}(A)=\Tr[O(A)\sigma]=0.
\end{equation}
Hence, for any $\ket{i},\ket{j}$ from the basis we can introduce the Radon-Nikodym derivatives $f_{\ket{i}\bra{j}}$, which will satisfy~\eqref{eq:radon_derivatives}. Then, for any $\sigma\in \End \widetilde{\mathcal{H}}$ we can define
\begin{equation}
    f_\sigma(\omega)=\sum_{i,j=1}^{\tdim} \bra{i}\sigma\ket{j}f_{\ket{i}\bra{j}}(\omega).
\end{equation}
These $f_\sigma$ are linear in $\sigma$ by definition. A direct calculation shows that they also satisfy~\eqref{eq:radon_derivatives}.
\end{proof}

Note that for $\sigma\ge 0$, the measure $P_{O,\sigma}$ is finite and nonnegative, but nonnegativity (and even real-valuedness) do not hold for a general $\sigma\in \End(\widetilde{\hil})$. 

By our definition of $f_\sigma(\omega)$, it can be written as
\begin{equation}
    f_\sigma(\omega)=\Tr\left[K(\omega)\sigma\right],\quad\text{where}\ K(\omega)=\sum_{i,j=1}^{\tdim}f_{|i\>\<j|}(\omega)|j\>\<i|. \label{eq:introk}
\end{equation}
As the following lemma demonstrates, $K(\omega)\ge 0$ for $P_{O,\id}$-almost every $\omega$:

\begin{lemma}
  \label{lem:magic}
  Let $(\Omega, \Sigma, \mu)$ be a measurable space and $V: \Omega\to \End(\thil)$ be a measurable operator-valued function with values acting on a finite-dimensional Hilbert space $\thil$ such that
\begin{equation}
  \forall A\in\Sigma\quad \int_A V(\omega)d\mu(\omega) \ge 0.
\end{equation}
Then, $V(\omega) \ge 0$ $\mu$-almost everywhere.
\end{lemma}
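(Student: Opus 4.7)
The plan is to reduce the operator-valued statement to a countable family of scalar inequalities by testing against a dense set of vectors. First, fix an arbitrary $v\in\thil$ and define the scalar-valued function
\[
  g_v:\Omega\to\mathbb{R},\qquad g_v(\omega) = \langle v, V(\omega)\,v\rangle.
\]
Since $V$ is measurable and $\thil$ is finite-dimensional, $g_v$ is a measurable real-valued function (it is a fixed linear combination of the matrix entries of $V$). By linearity of the integral and positivity of the hypothesis,
\[
  \int_A g_v(\omega)\,d\mu(\omega) = \Bigl\langle v,\Bigl(\int_A V(\omega)\,d\mu(\omega)\Bigr) v\Bigr\rangle \geq 0
\]
for every $A\in\Sigma$.

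Next, I would invoke the standard scalar fact: if $g$ is real measurable and $\int_A g\,d\mu\geq 0$ for all measurable $A$, then $g\geq 0$ $\mu$-a.e. The quick proof is to set $A_n=\{g<-1/n\}$, observe $\int_{A_n} g\,d\mu\leq -\mu(A_n)/n$, force $\mu(A_n)=0$, and take the union over $n$. Applied to $g_v$, this yields a $\mu$-null set $N_v\subset\Omega$ with $g_v(\omega)\geq 0$ for all $\omega\notin N_v$.

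The only real obstacle is that we need to conclude $V(\omega)\geq 0$ for almost every $\omega$, which a priori requires the inequality $g_v(\omega)\geq 0$ to hold simultaneously for \emph{all} $v\in\thil$. A naive union $\bigcup_{v\in\thil} N_v$ is over an uncountable index set and need not be null. This is resolved by separability of $\thil$: choose a countable dense subset $D\subset\thil$ and put $N=\bigcup_{v\in D}N_v$, which is a countable union of null sets and hence null. For any $\omega\notin N$ and any $v\in\thil$, pick a sequence $v_k\in D$ with $v_k\to v$; since $V(\omega)$ is a fixed (bounded) operator on the finite-dimensional space $\thil$, the map $u\mapsto\langle u,V(\omega)u\rangle$ is continuous, so $g_v(\omega)=\lim_k g_{v_k}(\omega)\geq 0$.

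This shows $\langle v,V(\omega)v\rangle\geq 0$ for all $v\in\thil$ and all $\omega\notin N$, i.e., $V(\omega)\geq 0$ $\mu$-almost everywhere. The argument uses only measurability of $V$, finite-dimensionality (to get separability and continuity of the quadratic form), and the elementary scalar lemma; no further structure of the POVM framework is needed, which is why Lemma~\ref{lem:magic} can be stated and used as a general measure-theoretic tool for the subsequent analysis of $K(\omega)$ in~\eqref{eq:introk}.
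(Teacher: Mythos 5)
Your proof is correct and follows essentially the same route as the paper's: test $V(\omega)$ against a countable dense set of vectors, reduce to the scalar fact that $\int_A g\,d\mu\ge 0$ for all $A$ forces $g\ge 0$ a.e., take a countable union of null sets, and extend to all vectors by continuity of the quadratic form. The only difference is that you spell out the elementary scalar step explicitly, which the paper leaves implicit.
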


\begin{proof}
  Let $\ket{\psi}\in \thil$ and define
\begin{equation}
  g_{\psi}(\omega) = \bra{\psi}V(\omega)\ket{\psi}.
\end{equation}
By the given condition, for any $A\in\Sigma$
\begin{equation}
  \int_A g_{\psi}(\omega)d\mu(\omega)= \bra{\psi}\int_A V(\omega)d\mu(\omega) \ket{\psi} \ge 0.
\end{equation}
It follows that $g_{\psi}(\omega)\ge 0$ $\mu$-almost everywhere. Let 
\begin{equation}
  Z_\psi=\{\omega\in \Omega \text{ s.t. } g_\psi(\omega) <0\}
\end{equation}
We have shown that $\mu(Z_\psi)=0$. Next, since  $\thil$ is finite-dimensional, it is separable. Therefore, there exists a countable set $\{\ket{\psi_k}\}_k$ dense in $\thil$. Let
\begin{equation}
  Z=\bigcup_k Z_{\psi_k}.
\end{equation}
We have that $\mu(Z)=0$. Finally, let $\omega\in \Omega \setminus Z$ and $\ket{\psi}\in \thil$. Because $\{\ket{\psi_k}\}$ is dense in $\thil$, there exists a sequence $\{\ket{\psi_{k_i}}\}$ converging to $\ket{\psi}$. Then,
\begin{equation}
  0 \le \bra{\psi_{k_i}}V(\omega)\ket{\psi_{k_i}} \xrightarrow{i \to \infty } \bra{\psi}V(\omega)\ket{\psi}.
\end{equation}
Overall, we get that
\begin{equation}
\forall \omega \in \Omega \setminus Z,\ \ket{\psi}\in \thil\quad \bra{\psi}V(\omega)\ket{\psi} \ge 0.
\end{equation}
Together with $\mu(Z)=0$, this gives the desired result.
\end{proof}

Now we can apply this analysis to the POVM corresponding to our tomography scheme, and get the desired upper bound on the fidelity.

\begin{theorem}\label{th:fidelity_upper_bound}
  For any tomography scheme $\mathcal{T}$ utilizing $n$ copies of the input state, the average fidelity is bounded by
\begin{equation}
  F(\mathcal{T})\le \frac{n+1}{n+d}.
\end{equation}
\end{theorem}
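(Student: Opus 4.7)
The plan is to combine Lemma~\ref{lem:radon} and Lemma~\ref{lem:magic} with the Haar moment formula $\int d\psi\,|\psi\>\!\<\psi|^{\otimes m} = \Pi^+_m/D_m$ in order to collapse the double integral defining $F(\mathcal{T})$ into a single trace against $\Pi^+_{n+1}$, which is then trivially bounded by $\Pi^+_{n+1}\le\id$.

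First I would invoke Lemma~\ref{lem:radon} with $\Omega = \mathcal{S}_d^*$, $\thil = \hil^{\otimes n}$, and $O = \mathcal{T}$, obtaining the density representation $f_\sigma(\omega)=\Tr[K(\omega)\sigma]$ for $P_{\mathcal{T},\sigma}$ with respect to the reference measure $P_{\mathcal{T},\id}$. Evaluating the defining identity $\int_A f_\sigma\,dP_{\mathcal{T},\id} = \Tr[\mathcal{T}(A)\sigma]$ on matrix-unit choices of $\sigma$ shows that the operator-valued integral $\int_A K(\omega)\,dP_{\mathcal{T},\id}(\omega)$ equals $\mathcal{T}(A)\ge 0$ for every $A\in\Sigma$, so Lemma~\ref{lem:magic} upgrades this to $K(\omega)\ge 0$ for $P_{\mathcal{T},\id}$-almost every $\omega$. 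Substituting the density into~\eqref{eq:avg_fidelity}, swapping the order of integration by Fubini (everything is nonnegative and the measures are finite), and combining the two traces via the identity $\Tr[A\sigma]\Tr[B\rho] = \Tr[(A\otimes B)(\sigma\otimes\rho)]$ yields
\begin{equation}
F(\mathcal{T}) = \int dP_{\mathcal{T},\id}(\sigma)\,\Tr\!\Bigl[(\sigma\otimes K(\sigma))\int d\psi\,|\psi\>\!\<\psi|^{\otimes(n+1)}\Bigr] = \frac{1}{D_{n+1}}\int dP_{\mathcal{T},\id}(\sigma)\,\Tr[(\sigma\otimes K(\sigma))\Pi^+_{n+1}],
\end{equation}
where the second equality is the Haar moment formula for $m=n+1$.

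The final step is operator-monotone. Since $\sigma\ge 0$ is pure and $K(\sigma)\ge 0$ almost surely, $\sigma\otimes K(\sigma)\ge 0$, and $\Pi^+_{n+1}\le \id$ gives $\Tr[(\sigma\otimes K(\sigma))\Pi^+_{n+1}]\le \Tr[\sigma]\Tr[K(\sigma)] = \Tr[K(\sigma)]$. Integrating against $dP_{\mathcal{T},\id}$ and recycling the identification $\int K(\sigma)\,dP_{\mathcal{T},\id}(\sigma) = \mathcal{T}(\mathcal{S}_d^*) = \Pi^+_n$ produces $\int \Tr[K(\sigma)]\,dP_{\mathcal{T},\id}(\sigma) = \Tr\Pi^+_n = D_n$, from which $F(\mathcal{T})\le D_n/D_{n+1}=(n+1)/(n+d)$ follows by computing the binomial ratio.

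The main obstacle I anticipate lies in the measure-theoretic bookkeeping that validates the almost-everywhere positivity of $K$: one has to argue that the operator-valued antiderivative $\int_A K(\omega)\,dP_{\mathcal{T},\id}(\omega)$ really does coincide with $\mathcal{T}(A)$ as an operator on $\hil^{\otimes n}$, not merely against a fixed basis of dyads, so that the hypothesis of Lemma~\ref{lem:magic} is actually met. Everything downstream is a direct application of positivity, Fubini, and the Schur-Weyl moment formula.
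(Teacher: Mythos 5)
Your proposal is correct and follows the paper's proof in all essential respects: the same Radon--Nikodym density $K(\omega)$ from Lemma~\ref{lem:radon}, the same almost-everywhere positivity via Lemma~\ref{lem:magic} (and the worry you flag about upgrading the dyad-by-dyad identity to the operator identity $\int_A K\,dP_{\mathcal{T},\id}=\mathcal{T}(A)$ is resolved exactly as you suggest, by linearity over a spanning set of matrix units), the same Fubini swap, and the same Haar moment formula collapsing the $\psi$-integral to $\Pi^+_{n+1}/D_{n+1}$. The one place you deviate is the final estimate: the paper follows~\cite{hayashi2005reexamination} and applies the partial-trace identity $\Tr_{n+1}\bigl(\Pi^+_{n+1}\sigma_{n+1}(n+1)\bigr)=\tfrac{1}{n+1}\Pi^+_n\bigl(\id+\sum_{k=1}^n\sigma_n(k)\bigr)$ and then bounds each $\sigma_n(k)\le\id$ term separately, whereas you bound $\Tr\bigl[(\sigma\otimes K(\sigma))\Pi^+_{n+1}\bigr]\le\Tr[\sigma]\Tr[K(\sigma)]=\Tr[K(\sigma)]$ in one stroke from $\sigma\otimes K(\sigma)\ge 0$ and $\Pi^+_{n+1}\le\id$. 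Both routes land on $\int\Tr[K(\sigma)]\,dP_{\mathcal{T},\id}(\sigma)=\Tr\Pi^+_n=D_n$ and hence the same bound $D_n/D_{n+1}=(n+1)/(n+d)$; your version is slightly leaner in that it dispenses with Eq.~(8) of~\cite{hayashi2005reexamination}, the $\sigma_n(k)$ bookkeeping, and the paper's WLOG normalization $K=\Pi^+_nK\Pi^+_n$, which is only needed there to absorb the symmetrizer appearing in that identity.
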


\begin{proof}
  We will introduce the density $K(\omega)$ from~\eqref{eq:introk} for our tomography scheme $\mathcal{T}$ and the corresponding measure $P_{\mathcal{T},\sigma}$. Lemma~\ref{lem:radon} allows us to introduce for any $\sigma\in\End(\hil^{\otimes n})$ the density $f_\sigma:\Omega\to\mathbb{C}$ s.t.
\begin{equation}
 \forall A\in\Sigma\ \ P_{\mathcal{T},\sigma}(A)=\int_A f_\sigma(\omega) dP_{\mathcal{T},\mathbbm{1}}(\omega).
\end{equation}
This density can be written as $f_\sigma(\omega)=\Tr\left(K(\omega)\sigma\right)$ 
for some $K(\omega)\in\End(\hil^{\otimes n})$. $K(\omega)$ can be considered as the operator-valued density of $\mathcal{T}$ w.r.t. $P_{\mathcal{T},\id}$:
\begin{equation}\label{eq:opdensity}
    \forall A\in\Sigma\quad \mathcal{T}(A)=\int_A K(\omega)dP_{\mathcal{T},\id}(\omega).
\end{equation}
Since $\mathcal{T}(A)\ge 0$, it follows by Lemma~\ref{lem:magic} that $K(\omega)\ge 0$ for $P_{\mathcal{T},\id}$-almost all $\omega$. Furthermore, as $\mathcal{T}(\mathcal{S}_d^*)=\Pi^+_n$, we have that for all $A\in\Sigma$,
$\mathcal{T}(A)\le \Pi^+_n$. Therefore, $\mathcal{T}(A)\Pi^+_n=\Pi^+_n \mathcal{T}(A)=\mathcal{T}(A)$. This means that $\tilde{K}(\omega)=\Pi^+_nK(\omega)\Pi^+_n$ would also satisfy~\eqref{eq:opdensity}. In the following, we will without loss of generality assume that
\begin{equation}
    K(\omega)=\Pi^+_nK(\omega)=K(\omega)\Pi^+_n.
\end{equation}

With these tools at hand, we are ready to adapt the proof from~\cite{hayashi2005reexamination} to the general case of POVM tomography schemes. We begin by rewriting the expression~\eqref{eq:avg_fidelity} for average fidelity:
\begin{align}
  F(\mathcal{T}) &= \int d\psi \int dP_{\mathcal{T},(\ket{\psi}\!\bra{\psi})^{\otimes n}}(\sigma) \Tr(\sigma\ket{\psi}\!\bra{\psi}) \\
                 &= \int d\psi \int dP_{\mathcal{T},\id}(\sigma)\Tr(\ket{\psi}\!\bra{\psi}\sigma)\Tr\left( K(\sigma)(\ket{\psi}\!\bra{\psi})^{\otimes n} \right). \label{eq:fidelity1}
\end{align}
Since fidelity is nonnegative and its average is bounded by 1, we can change the order of integration. Following~\cite{hayashi2005reexamination}, we introduce notation
\begin{equation}
  \sigma_n(k)=\mathbbm{1}^{\otimes (k-1)}\otimes \sigma \otimes \mathbbm{1}^{\otimes (n-k)}\in \hil^{\otimes n}.
\end{equation}
The product of traces in~\eqref{eq:fidelity1} can be rewritten in the following manner:
\begin{align}
\begin{split}
    F(\mathcal{T})& =\int dP_{\mathcal{T},\id}(\sigma)\int d\psi \\
    & \quad \quad \Tr\left( (K(\sigma)\otimes \mathbbm{1})(\ket{\psi}\!\bra{\psi})^{\otimes (n+1)}\sigma_{n+1}(n+1) \right). \label{eq:fidelity2}
\end{split}
\end{align}
We can now take the inner integral in closed form. As shown in~\cite[Eq.~(4)]{hayashi2005reexamination},
\begin{equation}
  \int d\psi (\ket{\psi}\!\bra{\psi})^{\otimes n}=\frac{\Pi^+_n}{D_n},
  \label{eq:avgpsi}
\end{equation}
where $D_n=\binom{n+d-1}{d}$. Another useful result in this paper is~\cite[Eq.~(8)]{hayashi2005reexamination}:
\begin{equation}
  \Tr_{n+1}\left( \Pi^+_{n+1}\sigma_{n+1}(n+1) \right)=\frac{1}{n+1}\Pi^+_n\left(\mathbbm{1}+\sum_{k=1}^n\sigma_n(k)\right),
  \label{eq:trs}
\end{equation}
where $\Tr_{n+1}:\End(\hil^{\otimes (n+1)})\to\End(\hil^{\otimes n})$ is the partial trace on the $(n+1)$-st copy of the system. These 
expressions allow us to rewrite~\eqref{eq:fidelity2} as follows:
\begin{align}
  F(\mathcal{T}) &= \frac{1}{D_{n+1}}\int dP_{\mathcal{T},\id}(\sigma)\Tr\left( (K(\sigma)\otimes \mathbbm{1})\Pi^+_{n+1} \sigma_{n+1}(n+1)  \right) \label{eq:avgnp1} \\
                 &= \frac{1}{D_{n+1}}\int dP_{\mathcal{T},\id}(\sigma)\Tr\left( K(\sigma)\Tr_{n+1}\left(\Pi^+_{n+1} \sigma_{n+1}(n+1)\right)\right)  \\
                 &= \frac{1}{(n+1)D_{n+1}}\int dP_{\mathcal{T},\id}(\sigma)\Tr\left( K(\sigma)\left(\mathbbm{1}+\sum_{k=1}^n\sigma_n(k)  \right)\right).
\end{align}
Finally, $\sigma_n(k)\le\id$, so $\Tr(K(\sigma)\sigma_n(k))\le \Tr(K(\sigma))$, and we can bound the above as
\begin{align}
F(\mathcal{T})\le \frac{1}{D_{n+1}}\int dP_{\mathcal{T},\id}(\sigma)\Tr\left( K(\sigma)\right) \nonumber \\ = \frac{\Tr \Pi^+_n}{D_{n+1}}=\frac{D_n}{D_{n+1}}=\frac{n+1}{n+d}.
\end{align}
\end{proof}

\subsection{Bandit policy as a sequence of tomographies}
\begin{theorem}
  Given a $d$-dimensional pure state general multi-armed quantum bandit we have that for any policy $\pi$ the average expected regret is bounded by
  \begin{align}
    \int d\psi \EX_{\ket{\psi}\!\bra{\psi},\pi} \left[\textup{Regret}(T,\pi,\ket{\psi}\!\bra{\psi}) \right] \geq (d-1)\log\left(\frac{T}{d+1} \right),
  \end{align}
where the expectation is taken w.r.t. the measure~\eqref{eq:prob_measure_maqb} over actions taken by the bandit, and the regret is defined in~\eqref{eq:def_regret}.
\end{theorem}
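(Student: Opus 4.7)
The plan is to interpret, for each round $t\in[T]$, the randomly selected action $\Pi_{a_t}$ as the output of a tomography scheme acting on the $t-1$ copies of $\Pi_\theta$ that have already been consumed, and then to invoke Theorem~\ref{th:fidelity_upper_bound} round by round. By~\eqref{eq:prob_measure_maqb}, the conditional probability of a history $(a_1,r_1,\ldots,a_{t-1},r_{t-1},a_t)$ given the environment $\Pi_\theta$ factorizes into the policy kernels $\pi_s$ and Born-rule factors of the form $\Tr(\Pi_\theta M_{r_s}(a_s))$, where $M_1(a)=\Pi_a$ and $M_0(a)=\id-\Pi_a$, each acting on a distinct copy of $\Pi_\theta$. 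Marginalizing over past rewards and actions therefore produces an operator-valued measure $N_t:\Sigma\to\End(\hil^{\otimes(t-1)})$ with $N_t(\mathcal{S}_d^*)=\id$ such that $\Pr_{\Pi_\theta,\pi}(a_t\in A)=\Tr(N_t(A)\,\Pi_\theta^{\otimes(t-1)})$. Conjugating by $\Pi^+_{t-1}$ leaves this distribution unchanged on symmetric inputs and converts $N_t$ into a tomography scheme $\mathcal{T}_t$ with $\mathcal{T}_t(\mathcal{S}_d^*)=\Pi^+_{t-1}$ in the sense of Section~\ref{sec:lower_bound}.

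Next I would apply Theorem~\ref{th:fidelity_upper_bound} to $\mathcal{T}_t$ with $n=t-1$. The integrand in~\eqref{eq:avg_fidelity} is exactly $\mathbb{E}_{\Pi_\theta,\pi}[\Tr(\Pi_\theta\Pi_{a_t})]$, so averaging over a uniformly random pure state yields
\begin{equation*}
    \int d\psi\, \mathbb{E}_{|\psi\>\!\<\psi|,\pi}\!\left[\Tr(|\psi\>\!\<\psi|\Pi_{a_t})\right] \,=\, F(\mathcal{T}_t) \,\le\, \frac{t}{t-1+d}.
\end{equation*}
Subtracting from $1$, summing over $t=1,\ldots,T$, and comparing the resulting harmonic sum with an integral,
\begin{equation*}
    \int d\psi\, \mathbb{E}_{|\psi\>\!\<\psi|,\pi}\bigl[\textup{Regret}(T,\pi,|\psi\>\!\<\psi|)\bigr] \,\ge\, \sum_{t=1}^T \frac{d-1}{t-1+d} \,\ge\, (d-1)\int_1^{T+1}\!\!\frac{du}{u-1+d} \,=\, (d-1)\log\frac{T+d}{d}.
\end{equation*}
The claim then follows from the elementary estimate $\log\frac{T+d}{d}\ge\log\frac{T}{d+1}$, which is equivalent to the trivial inequality $(T+d)(d+1)\ge Td$.

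The delicate part is the first step: since $\mathcal{A}=\mathcal{S}_d^*$ is continuous, one has to check $\sigma$-additivity and measurability of the operator-valued integrals obtained by iterating the policy kernels of Definition~\ref{def:policy} against the binary measurement elements $M_{r}(a)$, in order to legitimately produce a single POVM $N_t$ on $\hil^{\otimes(t-1)}$. Lemmas~\ref{lem:radon} and~\ref{lem:magic} already provide the measure-theoretic scaffolding needed for this reduction. Once the tomography identification is in place, the symmetrization step, the application of Theorem~\ref{th:fidelity_upper_bound}, and the harmonic-sum estimate are all routine.
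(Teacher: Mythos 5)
Your proposal is correct and follows essentially the same route as the paper: view the $t$-th action as the output of a tomography scheme obtained by pushing the policy's joint measurement forward onto the last action (symmetrizing so that the POVM sums to the symmetric projector), apply the average-fidelity bound of Theorem~\ref{th:fidelity_upper_bound}, and lower-bound the resulting harmonic sum by an integral. The only (harmless) difference is that you feed the fidelity bound $n=t-1$ copies rather than the paper's $n=t$, giving the marginally tighter per-round bound $\frac{t}{t-1+d}$ in place of $\frac{t+1}{t+d}$; both recover the stated $(d-1)\log\left(T/(d+1)\right)$.
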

The above Theorem gives $\EX \left[ \text{Regret}(T) \right] =\Omega(d\log \frac{T}{d})$. In the case of qubit environments, we have $d=2$ and ${\EX \left[ \text{Regret}(T) \right]=\Omega(\log T)}$.

\begin{proof}
  Given a policy $\pi$, we can introduce a POVM $E_t: (\Sigma\times \{0,1\})^{\times t}\to\End(\hil^{\otimes t})$ such that
  \begin{equation}
    P^t_{\ket{\psi}\!\bra{\psi}, \pi}(A_1,r_1,\dotsc,A_t,r_t)=\Tr\left( (\ket{\psi}\!\bra{\psi})^{\otimes t}E_t(A_1,r_1,\dotsc,A_t,r_t) \right),
  \end{equation}
where $P^t_{\ket{\psi}\!\bra{\psi},\pi}$ is the probability measure defined by~\eqref{eq:prob_measure_maqb}, but only for actions and rewards until step $t$. The construction of this POVM is presented in the proof of Lemma~9 in~\cite{lumbreras22bandit}.
We will also define the coordinate mapping
\begin{equation}
  \Psi_t(\Pi_{1}, r_1,\dotsc,\Pi_{t},r_t)=\Pi_{t},
\end{equation}
where $\Pi_{i}\in\mathcal{A}$ are actions and $r_i\in\{0,1\}$ are rewards of the PSMAQB. Now we can for each step $t$ define a tomography scheme $\mathcal{T}_t=E_t\circ \Psi_t^{-1}$ as the pushforward POVM from $E_t$ to the space $(\mathcal{A},\Sigma)$. Informally, this tomography scheme takes $t$ copies of the state, runs the policy $\pi$ on them, and outputs the $t$-th action of the policy as the predicted state. For $A\in\Sigma$, we can rewrite the tomography's distribution on predictions as
\begin{align}
    &P_{\mathcal{T},(\ket{\psi}\!\bra{\psi})^{\otimes t}}(A)=\Tr\left(\mathcal{T}_t(A)(\ket{\psi}\!\bra{\psi})^{\otimes t}\right)\nonumber \\&=\Tr\left(E_t(\Psi_t^{-1}(A))(\ket{\psi}\!\bra{\psi})^{\otimes t}\right)=\left(P^t_{\ket{\psi}\!\bra{\psi},\pi}\circ \Psi^{-1}\right)(A).
\end{align}
Then, the fidelity of $\mathcal{T}_t$ on the input $\ket{\psi}\!\bra{\psi}$ can be rewritten as
\begin{align}
    &F(\mathcal{T}_t,\ket{\psi}\!\bra{\psi}) =\int \<\psi|\rho|\psi\> dP_{\mathcal{T}_t,(\ket{\psi}\!\bra{\psi})^{\otimes t}}(\rho) \\
    &= \int \<\psi|\Psi_t(\Pi_1,r_1,\cdots,\Pi_t,r_t)|\psi\> dP^t_{\ket{\psi}\!\bra{\psi},\pi}(\Pi_1,r_1,\cdots,\Pi_t,r_t) \\
    &= \EX_{\ket{\psi}\!\bra{\psi},\pi}\left[\<\psi|\Pi_{t}|\psi\>\right].
\end{align}
Using the bound for average tomography fidelity on $\mathcal{T}_t$ from Theorem~\ref{th:fidelity_upper_bound}, we can now bound the average regret of $\pi$:
\begin{align}
  &\int \EX_{\ket{\psi}\!\bra{\psi}}\left[\textup{Regret}(T, \pi,\ket{\psi}\!\bra{\psi})\right]d\psi \\&= T-\sum_{t=1}^T\int \ \mathbb{E}_{\ket{\psi}\!\bra{\psi}}\left[\<\psi|\Pi_{t}|\psi\>\right]d\psi \label{eq:sumint}
  \end{align}

  \begin{align}
  &= T-\sum_{t=1}^T F(\mathcal{T}_t) \label{eq:introt} \geq \sum_{t=1}^T 1 - \frac{t+1}{t+d} \\
  & = \sum_{t=1}^T \frac{d-1}{t+d} \geq (d-1)\log\left( \frac{T}{d+1} \right),
\end{align}
where the last inequality follows from bounding the sum with the integral of the function $f(t) = 1/(t+d)$.
\end{proof}

\paragraph*{Acknowledgements:} JL thanks Jan Seyfried and Yanglin Hu for comments and suggestions, Erkka Happasalo for discussions about disturbance and Roberto Rubboli for many technical discussions. 
Mikhail Terekhov is grateful to be supported by the EDIC Fellowship from the School of Computer Science at EPFL. 
Josep Lumbreras and Marco Tomammichel are supported by the National Research Foundation, Singapore and A*STAR under its CQT Bridging Grant and the Quantum Engineering Programme grant NRF2021-QEP2-02-P05.

\bibliographystyle{ultimate}
\bibliography{biblio_purestatebandit}

\end{document}